\documentclass[12pt]{article}
\usepackage{amsthm,amsmath, graphicx, booktabs, multirow, xcolor, enumitem}
\usepackage{graphicx,psfrag,epsf}
\RequirePackage[numbers]{natbib}

\pdfminorversion=4
\newcommand{\blind}{1}
\numberwithin{equation}{section}
\theoremstyle{plain}
\newtheorem{theorem}{Theorem}[section]

\newtheorem{corollary}{Corollary}[section]

\newcommand{\bm}{\boldsymbol}
\newcommand{\bmath}{\boldsymbol}

\addtolength{\oddsidemargin}{-.4in}%
\addtolength{\evensidemargin}{-.4in}%
\addtolength{\textwidth}{0.9in}%
\addtolength{\textheight}{0.8in}%
\addtolength{\topmargin}{-.5in}%

\begin{document}

\bibliographystyle{agsm}

\def\spacingset#1{\renewcommand{\baselinestretch}%
{#1}\small\normalsize} \spacingset{1}


\if1\blind
{
  \title{\bf A Semiparametric Approach to Model Effect Modification}
  \author{Muxuan Liang\\
    Public Health Sciences Division\\
    Fred Hutchinson Cancer Research Center\\
    Seattle, Washington 98109\\
    and \\
    Menggang Yu\thanks{
    	Research reported in this article was partially funded through a Patient-Centered Outcomes
    	Research Institute (PCORI) Award (ME-1409-21219). The
    	views in this publication are solely the responsibility of the authors and do not necessarily
    	represent the views of the PCORI, its Board of Governors or Methodology Committee.}\hspace{.2cm}\\
    Department of Biostatistics and Medical Informatics\\
    University of Wisconsin-Madison\\}
  \maketitle
} \fi

\if0\blind
{
  \bigskip
  \bigskip
  \bigskip
  \begin{center}
    {\LARGE\bf A Semiparametric Approach to Model Effect Modification}
\end{center}
  \medskip
} \fi

\begin{abstract}
One fundamental statistical question for research areas such as precision medicine and health disparity is about discovering effect modification of treatment or exposure by observed covariates. We propose a semiparametric framework for identifying such effect modification. Instead of using the traditional outcome models, we directly posit semiparametric models on contrasts, or expected differences of the outcome under different treatment choices or exposures. Through semiparametric estimation theory, all valid estimating equations, including the efficient scores, are derived. Besides doubly robust loss functions, our approach also enables dimension reduction in presence of many covariates. The asymptotic and non-asymptotic properties of the proposed methods are explored via a unified statistical and algorithmic analysis. Comparison with existing methods in both simulation and real data analysis demonstrates the superiority of our estimators especially for an efficiency improved version. 
\end{abstract}

\noindent%
{\it Keywords:}  Dimension reduction, Interaction, Precision medicine, Semiparametric efficiency, Tangent space
\vfill

\newpage
\spacingset{1.45} 
\section{Introduction}
\label{sec:intro}

{In many scientific investigations, estimation of the effect modification is a major goal. For example, in precision medicine research, recommending an appropriate treatment among many existing choices is a central question. Based on patient's characteristics, such recommendation amounts to estimating treatment effect modification \cite{Kraemer2013}. Another example is health disparity research that focuses on measuring modification of the association between disparity categories (e.g. race and socioeconomic status) and health outcomes. The estimated effect modification can be utilized to improve the health system \cite{Braveman2006}.}

{In the classical regression modeling framework, this amounts to estimating interactions between covariates and a certain interested variable. Take the precision medicine example, the goal is to find how the patient characteristics interact with the treatment indicator. If the interest focuses on treatment recommendation, then main effects of these characteristics do not directly contribute to it because they are the same for all treatment choices. Similarly for the health disparity example,  the goal is to find how the modifiers interact with the disparity categories. If the interest focuses on elimination of disparity, then main effects of modifiers are {of less importance} because they are the same for all disparity categories.}

Traditionally effect modification or statistical interaction discovery is conducted mainly by testing or estimating product terms in outcome models. Such discovery is hard as it usually requires large sample sizes \cite{Greenland1993}, especially when many covariates are present.  Recent works in the area of precision medicine illustrate that when the goal is treatment recommendation, investigation on the product term in an outcome model may not be ideal as the outcome is also affected by covariate main effects \cite{Zhao2012, Tian2014, Xu2015, Chen2017, zhang2012class, lu2013}. As we have discussed above, these main effects usually are not directly related to treatment recommendation. {Therefore these works focus on learning contrast functions which are differences of conditional expectations of the outcome under two treatment choices. Nonetheless, there is a lack of the literature on how the main effects or estimation of the main effects can contribute to the efficiency of learning such contrast function.}

Most of the existing works use either nonparametric \cite{Zhao2012, zhang2012class} or parametric approaches \cite{Kraemer2013, lu2013, Xu2015}. The nonparametric approaches are flexible but may not be ideal when faced with a large number of covariates.  {The parametric approaches on the other hand can be sensitive to the underlying model assumptions.}  \citet{song2017} considered a single index model for the contrast function to fill an important middle ground. Single index models are semiparametric models where the index is formed from a linear combination of covariates and a wrapper function that takes the index as argument is nonparametric. 
However only an intuitive method of estimation was considered in \citet{song2017}. No systematic investigation was given to explore other possible estimating equations. Therefore issues such as efficiency were left largely untackled. 

More importantly, it is practical to provide more flexibility in the semiparametric framework by allowing more than one indices. That is multiple index models can better capture the heterogeneity in effect modification. As a simple example, a single index model with the linear index part depending on the product of two covariates is not a single index model any more, if this product is not included as a fitting covariate. However multiple index models can easily capture this deviation from the linearity. When there are more than two treatments, it is also mathematically appealing to consider multiple index models. For example, single index models can be used to model the contrasts between treatments A and B and between B and C respectively. But if the indices of these two models are different, the resulting contrast between A and C will be a double index model, not a single index model. This asymmetry, of assuming two single index models for two constrasts and one double index model for the other contrast, is easily avoided by assuming the multiple index models for all contrasts.

We therefore propose a more general semiparametric approach which is essentially a multiple index modeling framework for multiple treatments. We will also consider determination of the number of indices. Under our framework, we make the following new contributions. First, based on the well-established semiparametric estimation theory \citep{BKRWbook, Tsiatisbook}, we characterize all valid estimating equations, including the efficient score under our framework. This leads to many possible choices of estimating equations, and efficiency consideration becomes very natural in our approach. Second, because multiple index models are intrinsically related to dimension reduction \citep{Xia2002, Xia2007}, our method can also be used as a dimension reduction tool for interaction discovery with a specific variable. Third, we do not restrict the treatment or exposure variable to be binary. Literature for more than two treatment choices seem very sparse \citep{Lou2017}. Fourth, we also study the asymptotic and non-asymptotic properties of the resulting estimators based on a careful analysis of the computing algorithm. This enables inference and provides useful insights for using our approach in practice. 

Estimating the effect modification is an important problem in causal inference \cite{abrevaya2015,kosuke2013}. Under the potential outcome framework \cite{rubin1974, rubin2005}, and the well-known assumptions of the Stable Unit Treatment Value Assumption (SUTVA), consistency, and treatment assignment ignorability \cite{imbens_rubin_2015}, the effect modification becomes the conditional average treatment effect (CATE). Under these assumptions, popular methods such as inverse probability weighting (IPW) and augmented inverse probability weighting (AIPW) \cite{robins1994, robins2005, Tsiatis2009, Tan2010, robins2012} were commonly used to estimate average treatment effect (ATE) \cite{hirano2001, hirano2003} and the CATE \cite{abrevaya2015,kosuke2013}. 
{On estimating the CATE, many literature also chose to directly work with outcome models \cite{Green2012, Xie2012, Lu2018, Wager2018, Kunzel2019}.} The well-known structural nested models and the corresponding G-estimation focused on parametric models for the CATE with relatively few covariates \cite{Robins1992b, Robins1994b, vansteelandt2014}. We posit a multiple index model on the contrast function or the CATE and show how the main effects contribute to the efficiency. Our proposed approach in some way extends these results on the CATE in a semiparametric modeling framework.  In some literature \cite{luo2017,huang2017,Persson2017}, the effect modification appears to be used also as an important middle step to estimate the population level causal quantities such as the ATE. However, the methods proposed in these literature, including index models or dimension reduction, are for the outcomes, not for the contrast functions.

\section{A semiparametric framework for modeling contrast functions}
\label{s:method}


Suppose $\bmath{X}\in \mathcal{X}$ is a $p$-dimensional vector of covariates, $Y$ is an outcome, and $T$ is a discrete variable whose effect on $Y$ and modification of this effect by ${\bm X}$ are of interest. We first consider the case when $T$ has only two levels. We can also use $\{1, 2\}$, instead of $\{-1, 1\}$, to denote the levels of $T$ and to conform with our notation below for the more general case. However we keep  $\{-1, 1\}$ as it leads to simpler notation in our presentation. 

The main goal is to learn the following contrast function based on observed data, 
\begin{equation}\label{eq:model_contrast}
\Delta(\bmath{X}) \equiv E[Y|T=1,\bmath{X}]-E[Y|T=-1,\bmath{X}].
\end{equation}
We assume that a larger $Y$ is better. Then when $\Delta(\bmath{X})>0$, $T=1$ rather than $T=-1$ leads to a better clinical outcome for given $\bmath{X}$, and vice versa. Therefore we consider the following model in this article 
\begin{equation}\label{eq:model_contrast_g}
\Delta(\bmath{X})=g(\bmath{B}_0^\top \bmath{X})
\end{equation}
 where $g$ is an unknown function and $\bmath{B}_0$ is a $p\times d$ matrix. {Here $d$ represents the number of indices. That is, $d=1$ corresponds to a single index model and $d>1$ to a multiple index model. }

 Note that there is an identifiability issue in Model \eqref{eq:model_contrast_g} when both $g$ and $\bmath{B}_0$ are unrestricted. This is a known issue in both the index models and dimension reduction literature \citep{XIA20061162, Xia2008,  BingLi2018, cook2007, Ma2012, Ma2013, Xia2002, Xia2007}.  {To resolve this issue, we adopt the common strategy in  the dimension reduction literature \citep{BingLi2018, cook2007, Ma2012} and assume that the columns of $\bmath{B}_0$ form a Grassmann manifold. That is, $\bmath{B}_0$ satisfies}
 \begin{equation*}
 \begin{pmatrix}
 	{\bm I}_{d\times d}, {\bm 0}_{d\times (p-d)}
 \end{pmatrix}
\bmath{B}_0 = {\bm I}_{d\times d}.
\end{equation*} where ${\bm I}_{d\times d}$ is the identity matrix with rank $d$.

Model \eqref{eq:model_contrast_g} is very flexible as the contrast function is defined in terms of the conditional means of the outcome, instead of its conditional distributions. The model is therefore semiparametric as it leaves the other parts of the distribution (e.g. variance)  unspecified. This is similar to the well known semiparametric conditional mean model commonly used in  econometrics \cite{CHAMBERLAIN1987, Newey2004}.

Consequently, the outcome $Y$ can be of many types as long as its mean function satisfies our model. For example, when $Y$ is binary, the contrast function represents the difference of the success probabilities. Then Model \eqref{eq:model_contrast_g} implies a single or multiple index model, depending on $d=1$ or $d>1$, for the difference of its success probabilities under the two treatment choices.

Now consider the case when $T$ has $K$ levels. To fully represent the effect modification, we need to use $K-1$ contrasts. For example when $K=3$, we can use contrasts such as $ E[Y|T=1,\bmath{X}]-E[Y|T=2,\bmath{X}]$ and $E[Y|T=3,\bmath{X}]-\frac{1}{2}(E[Y|T=1,\bmath{X}]+E[Y|T=2,\bmath{X}])$. In general, we extend the concept of the contrast function in (\ref{eq:model_contrast}) to a contrast vector function of length $K-1$ as follows
\begin{equation}\label{eq:trteffect_multi}
{\bm \Delta}(\bmath{X}) \equiv {\bm\Omega}\left(
\begin{matrix}
E[Y|T=1,\bmath{X}]\\
\vdots\\
E[Y|T=K,\bmath{X}]
\end{matrix}
\right).
\end{equation}
where
${\bm\Omega}$ is a pre-specified $(K-1)\times K$ matrix. The $K-1$ rows of ${\bm\Omega}$ represent the interested contrasts. For $K=2$, ${\bm\Omega}=(1, -1)$.  For the above example of $K=3$, we have
\begin{eqnarray*}
	{\bm\Omega}=
	\begin{pmatrix}
		1 & -1 & 0 \\
		-\frac{1}{2} & -\frac{1}{2} &1
	\end{pmatrix}.
\end{eqnarray*}
For the contrasts to be interpretable, we require the sum of $i$th row of ${\bm\Omega}$ to be 0, that is, $\sum_{j=1}^K{\bm\Omega}_{ij}=0$ for $i=1, \dots, K-1$. Reasonably, we also require ${\bm\Omega}{\bm\Omega}^\top$ to be invertible. 

In this setup, the corresponding model is
\begin{equation}\label{eq:model_contrast_multi}
{\bm \Delta}(\bmath{X}) = {\bm g} (\bmath{B}_0^\top\bmath{X}),
\end{equation} 
where ${\bm g}$ is a length $(K-1)$ vector function of  $\bmath{B}_0^\top\bmath{X}$.

\section{Tangent spaces and semiparametric efficient scores}


Similar to the work in dimension reduction \citep{Ma2012, Ma2013, MA2014}, we characterize the nuisance tangent space and its orthogonal complement for $\bmath{B}_0$. The corresponding efficient score is also derived. We closely follow the notions and techniques of \citet{Tsiatisbook}. The derivation requires working with the full data likelihood even though we do not specify the form of the distribution of $Y$ in Model \eqref{eq:model_contrast_g} or \eqref{eq:model_contrast_multi}. In other words, we need to convert these models into equivalent outcome models that involve $\bmath{B}_0$, $g$, and the unspecified nonparametric parts.  

In our Supplemental Materials, we show that our model for binary treatments  \eqref{eq:model_contrast} is equivalent to the following model for the outcome $Y$:
\begin{equation}\label{eq:model2}
Y=\frac{1}{2}Tg(\bmath{B}_0^\top \bmath{X})+\epsilon
\end{equation}
where $\epsilon$ is some random variable satisfying the following conditional mean condition
\begin{equation}\label{eq:condition_error}
E\left [\epsilon \middle\vert T, \bmath{X}\right ]=E\left [\epsilon \middle\vert  \bmath{X}\right ].
\end{equation}
The equivalence can be shown by verifying that  $\epsilon \equiv Y-\frac{1}{2}Tg(\bmath{B}_0^\top \bmath{X})$  satisfies (\ref{eq:condition_error}). This representation \eqref{eq:model2} enables us to directly work with the full data likelihood. 

Similar to the binary setting, when $T$ is multi-level, our model \eqref{eq:model_contrast_multi}  is equivalent to the following model for the outcome $Y$:
\begin{equation}\label{eq:model2_multi}
Y={\bm\Omega}_{\cdot T}^\top \left ({\bm\Omega}{\bm\Omega}^\top\right)^{-1} {\bm g}(\bmath{B}_0^\top\bmath{X})+\epsilon \end{equation}
where ${\bm\Omega}_{\cdot T}$ is the column of ${\bm\Omega}$ that corresponds to the value of the treatment $T$. 
Similarly $\epsilon$ in \eqref{eq:model2_multi} needs to satisfy the condition \eqref{eq:condition_error}.

We first present results for the general multi-level $T$ and assume that the function class of interest is  the mean zero Hilbert space $\mathcal{H}=\{f(\epsilon, \bmath{X}, T):E(f)=0\}$. These results will then be simplied for binary treatments.


The full data likelihood is 
\begin{equation*}
p_{\bmath{X}}(\bmath{X})\pi_T(\bmath{X})p_{\epsilon}\left (Y-{\bm\Omega}_{\cdot T}^\top \left ({\bm\Omega}{\bm\Omega}^\top\right)^{-1} {\bm g}(\bmath{B}_0^\top\bmath{X}),\bmath{X}, T\right ),
\end{equation*}
where $p_{\bmath{X}}$ is the density of $\bmath{X}$, $\pi_T(\bmath{X})$ is the density of $T$ conditional on $\bm X$, and $p_{\epsilon}$ is the density of $\epsilon$ conditional on $\bm X$ and $T$, with respect to some dominating measure. The density $\pi_T(\bmath{X})$ is  also known as propensity score \citep{Rosenbaum1983}. Note that $p_{\bmath{X}}$, $\pi_T$, $p_{\epsilon}$, and $g$ are infinite-dimensional nuisance parameters.The tangent spaces correspond to $p_{\bmath{X}}$, $p_{\epsilon}$, and $\pi_T$ are 
\begin{eqnarray*}
	\Lambda_{\bmath{X}}&=&\{f(\bmath{X})\in \mathcal{H}: E[f]=0\} \label{eq:Lambda2s_multi}\\%
{\Lambda_{\epsilon}} &=& \left \{f(\epsilon,\bmath{X}, T)\in \mathcal{H}: E(f|\bmath{X}, T)=0 
	\text{ and } E\big[f\epsilon\middle\vert T, \bmath{X}\big]= E\big[f\epsilon\middle\vert\bmath{X}\big]\right \}.\label{eq:Lambda1s_multi} \\%
	\Lambda_{\pi} &=& \{f(\bmath{X}, T)\in \mathcal{H}: E[f\,|\,\bmath{X}]=0\}.
\end{eqnarray*}
Through some algebra, we can rewrite $\Lambda_{\pi}$ as
\begin{equation*}\label{eq:Lambdapi_multi}
\Lambda_{\pi} = \left \{\bmath{w}_T^\top \left ({\bm\Omega}{\bm\Omega}^\top\right)^{-1}\bm f_{\pi}(\bmath{X}), \forall \bm f_{\pi}(\bmath{X}):\mathcal{X}\mapsto R^{K-1}\right \}.
\end{equation*} where  $$\bmath{w}_T=\frac{{\bm\Omega}_{\cdot T}}{\pi_T(\bmath{X})}.$$ 
The tangent space of $\bm g$ is
\begin{equation*}\label{eq:Lambdag_multi}
\Lambda_{\bm g} = \left \{\frac{p^{'}_{\epsilon, 1}(\epsilon,\bmath{X}, T)}{p_{\epsilon}(\epsilon,\bmath{X}, T)}{\bm\Omega}_{\cdot T}^\top \left ({\bm\Omega}{\bm\Omega}^\top\right)^{-1}\bm f_{\bm g}(\bmath{B}_0^\top \bmath{X}),\forall \bm f_{\bm g}(\bmath{B}_0^\top \bmath{X}):\mathcal{X}\mapsto R^{K-1}\right \},
\end{equation*}
where $p^{'}_{\epsilon, 1}(\cdot)$ is the derivative of $p_{\epsilon}(\epsilon, \bmath{X}, T)$ w.r.t $\epsilon$.

Let ${\perp}$ denote the orthogonal complement of a Hilbert space. Denote the nuisance tangent space $\Lambda\equiv \Lambda_{\bmath{X}}+\Lambda_{\epsilon}+\Lambda_{\pi}+\Lambda_{\bm g}$. Then we have
\begin{theorem}\label{thm:nuissance_multi}
	The orthogonal complement of the nuisance tangent space, $\Lambda^{\perp}$,  is a subspace characterized by all functions with the form
	\begin{equation*}
	\bmath{w}_T^\top\left [\epsilon- E(\epsilon|\bmath{X})\right ]\left [\bm{\alpha}(\bmath{X})-E\{\bm \alpha(\bmath{X})|\bmath{B}_0^\top \bmath{X}\}\right ],
	\end{equation*}
	for any function $\bmath{\alpha}(\bmath{X}):\mathcal{X}\mapsto R^{K-1}$. 
\end{theorem}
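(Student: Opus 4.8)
The plan is to write $\Lambda^{*}$ for the candidate space in the statement and to prove $\Lambda^{\perp}=\Lambda^{*}$ through the two inclusions $\Lambda^{*}\subseteq\Lambda^{\perp}$ (orthogonality) and $\Lambda^{\perp}\subseteq\Lambda^{*}$ (completeness), working in $\mathcal H$ with $\langle f_1,f_2\rangle=E[f_1f_2]$. Throughout I would abbreviate $\tilde\epsilon=\epsilon-E(\epsilon\mid\bmath{X})$ and $S_\epsilon=\eta'_{\epsilon,1}/\eta_{\epsilon}$. The linchpin is a preliminary observation drawn from constraint \eqref{eq:condition_error_multi}: since each row of ${\bm\Omega}$ sums to zero and ${\bm\Omega}{\bm\Omega}^\top$ is invertible, ${\bm\Omega}$ has rank $K-1$ with null space spanned by $\mathbf 1_K$; rewriting \eqref{eq:condition_error_multi} as ${\bm\Omega}\,(E[\epsilon\mid\bmath{X},T=1],\dots,E[\epsilon\mid\bmath{X},T=K])^\top={\bm 0}$ then forces $E[\epsilon\mid\bmath{X},T]$ to be free of $T$, i.e.\ $E[\epsilon\mid\bmath{X},T]=E[\epsilon\mid\bmath{X}]$ and hence $E[\tilde\epsilon\mid\bmath{X},T]=0$. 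Two further identities will be used repeatedly: $E[\bmath{w}_T\mid\bmath{X}]={\bm 0}$ and $E[\bmath{w}_T{\bm\Omega}_{\cdot T}^\top\mid\bmath{X}]={\bm\Omega}{\bm\Omega}^\top$ (both by summing over $t$ against $\pi_t$), together with the integration-by-parts identity $E[\tilde\epsilon S_\epsilon\mid\bmath{X},T]=-1$ under standard regularity on $\eta_{\epsilon}$.

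For the orthogonality direction I would take a generic $\psi=\tilde\epsilon\,\bmath{w}_T^\top\tilde{\bm\alpha}(\bmath{X})$ with $\tilde{\bm\alpha}=\bm\alpha-E\{\bm\alpha\mid\bmath{B}_0^\top\bmath{X}\}$ and verify $\langle\psi,f\rangle=0$ against each of $\Lambda_{\bmath{X}},\Lambda_{\pi},\Lambda_{\epsilon},\Lambda_{\bm g}$ by iterated conditioning. For $f\in\Lambda_{\bmath{X}}$ or $f\in\Lambda_{\pi}$ (functions of $(\bmath{X},T)$) I condition on $(\bmath{X},T)$ and the factor $E[\tilde\epsilon\mid\bmath{X},T]=0$ kills the expectation. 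For $f\in\Lambda_{\epsilon}$ I condition on $\bmath{X}$ and expand $E[\bmath{w}_T\tilde\epsilon f\mid\bmath{X}]=E[\bmath{w}_T\epsilon f\mid\bmath{X}]-E(\epsilon\mid\bmath{X})E[\bmath{w}_T f\mid\bmath{X}]$; the first term vanishes by the second defining constraint of $\Lambda_{\epsilon}$ and the second by $E[\bmath{w}_T f\mid\bmath{X}]=E[\bmath{w}_T E(f\mid\bmath{X},T)\mid\bmath{X}]=0$. For $f\in\Lambda_{\bm g}$ I condition first on $(\bmath{X},T)$ to replace $\tilde\epsilon S_\epsilon$ by $-1$, then on $\bmath{X}$ to turn $\bmath{w}_T{\bm\Omega}_{\cdot T}^\top$ into ${\bm\Omega}{\bm\Omega}^\top$ and collapse the weight matrix, leaving $-E[\tilde{\bm\alpha}(\bmath{X})^\top\bm h_{\bm g}(\bmath{B}_0^\top\bmath{X})]$, which vanishes after conditioning on $\bmath{B}_0^\top\bmath{X}$ because $E[\tilde{\bm\alpha}\mid\bmath{B}_0^\top\bmath{X}]={\bm 0}$.

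For completeness I would first decompose $\mathcal H=\Lambda_{\bmath{X}}\oplus\Lambda_{\pi}\oplus\mathcal H_3$, where $\mathcal H_3=\{f:E[f\mid\bmath{X},T]=0\}$, via $f=E[f\mid\bmath{X}]+(E[f\mid\bmath{X},T]-E[f\mid\bmath{X}])+(f-E[f\mid\bmath{X},T])$. Since $\Lambda\supseteq\Lambda_{\bmath{X}}\oplus\Lambda_{\pi}$, any $h\in\Lambda^{\perp}$ lies in $\mathcal H_3$. Inside $\mathcal H_3$ I claim the orthogonal complement of $\Lambda_{\epsilon}$ is $V=\{\tilde\epsilon\,\bmath{w}_T^\top\bm b(\bmath{X}):\bm b:\mathcal X\mapsto R^{K-1}\}$: the inclusion $V\perp\Lambda_{\epsilon}$ is the computation just carried out, and $\mathcal H_3=\Lambda_{\epsilon}\oplus V$ follows by projecting any $f\in\mathcal H_3$ onto $V$, i.e.\ solving $E[\bmath{w}_T f\tilde\epsilon\mid\bmath{X}]=E[\bmath{w}_T\bmath{w}_T^\top\tilde\epsilon^2\mid\bmath{X}]\,\bm b$ for $\bm b$. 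Thus $h\in\Lambda^{\perp}$ necessarily has the form $h=\tilde\epsilon\,\bmath{w}_T^\top\bm b(\bmath{X})$; imposing in addition $h\perp\Lambda_{\bm g}$ reduces, through the same conditioning chain, to $-E[\bm b(\bmath{X})^\top\bm h_{\bm g}(\bmath{B}_0^\top\bmath{X})]=0$ for all $\bm h_{\bm g}$, i.e.\ $E[\bm b\mid\bmath{B}_0^\top\bmath{X}]={\bm 0}$. Any such $\bm b$ equals its own centering $\bm b-E\{\bm b\mid\bmath{B}_0^\top\bmath{X}\}$, so $h\in\Lambda^{*}$, which closes the argument.

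I expect the main obstacle to be the completeness direction, and specifically the identification of $V$ as the exact orthogonal complement of $\Lambda_{\epsilon}$ within $\mathcal H_3$: this requires constructing the projection explicitly and checking that the conditional weight matrix $E[\bmath{w}_T\bmath{w}_T^\top\tilde\epsilon^2\mid\bmath{X}]=\sum_{t}{\bm\Omega}_{\cdot t}{\bm\Omega}_{\cdot t}^\top\,\mathrm{Var}(\epsilon\mid\bmath{X},T=t)/\pi_t(\bmath{X})$ is positive definite, which holds because the $K$ columns ${\bm\Omega}_{\cdot t}$ span $R^{K-1}$ and the weights are strictly positive. The one conceptual step that makes every other computation collapse is the preliminary identity $E[\epsilon\mid\bmath{X},T]=E[\epsilon\mid\bmath{X}]$; once that is in hand, the remainder is bookkeeping with iterated conditional expectations.
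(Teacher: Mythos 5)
Your proof is correct, and it is organized differently enough from the paper's that a comparison is worth recording. The paper never decomposes $\mathcal{H}$ directly: it factors $\Lambda_{\epsilon}=\Lambda_{\epsilon a}\cap\Lambda_{\epsilon b}$ with $\Lambda_{\epsilon a}=\{f:E(f\mid\bmath{X},T)=0\}$ and $\Lambda_{\epsilon b}=\{f:E(\bmath{w}_Tf\epsilon\mid\bmath{X})=0\}$, deduces $(\Lambda_{\bmath{X}}\oplus\Lambda_{\epsilon})^{\perp}=\Lambda_{\pi}+\Lambda_{\epsilon b}^{\perp}$ by subspace algebra, computes $\Lambda_{\epsilon b}^{\perp}=\{\bmath{w}_T^\top\epsilon\,\bm{\alpha}(\bmath{X})\}$ over all of $\mathcal{H}$ via a projection with weight matrix $E(\bmath{w}_T\bmath{w}_T^\top\epsilon^2\mid\bmath{X})$ (Lemma \ref{thm:1sb}), and then extracts $\mathcal{S}_1=(\Lambda_{\bmath{X}}+\Lambda_{\epsilon}+\Lambda_{\pi})^{\perp}$ from the orthogonality $\mathcal{S}_1\perp\mathcal{S}_2=\Lambda_{\pi}$. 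You instead split $\mathcal{H}=\Lambda_{\bmath{X}}\oplus\Lambda_{\pi}\oplus\mathcal{H}_3$ orthogonally and identify the complement of $\Lambda_{\epsilon}$ inside $\mathcal{H}_3$ using the centered weight matrix $E(\bmath{w}_T\bmath{w}_T^\top\tilde\epsilon^{\,2}\mid\bmath{X})$, where $\tilde\epsilon=\epsilon-E(\epsilon\mid\bmath{X})$; the projection computation at the core is the same, but your bookkeeping replaces the intersection-of-complements manipulations with a single three-way orthogonal split, and you verify both inclusions explicitly rather than getting orthogonality as a by-product of the construction. Two items in your argument are genuine additions rather than reorganization. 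First, your linchpin identity $E(\epsilon\mid\bmath{X},T)=E(\epsilon\mid\bmath{X})$, read off from the rank and null space of $\bm\Omega$, appears nowhere in the paper, yet it is exactly what the paper's unproved display asserting $\mathcal{S}_1\perp\mathcal{S}_2$ rests on: that conditional expectation equals $\bm\gamma^\top E(\bmath{w}_T\bmath{w}_T^\top\tilde\epsilon\mid\bmath{X})\,\bm\alpha$, which vanishes precisely because $E(\tilde\epsilon\mid\bmath{X},T=t)=0$ for every $t$; making this explicit also explains why $E(\epsilon\mid\bmath{X})$ is the correct centering in the statement. Second, you check positive definiteness of the conditional weight matrix before inverting it, an assumption the paper makes silently both in Lemma \ref{thm:1sb} and in the definition of $\bm V(\bmath{X})$ in Theorem \ref{lemma:projection_multi}. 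What the paper's route buys is economy and a direct mapping onto the Section 4.5 machinery of the cited Tsiatis monograph; what yours buys is a self-contained argument with the two hidden hypotheses surfaced and justified.
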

Detailed proofs of this theorem and other theorems and corollaries are given in the Supplemental Materials. To obtain the efficient score, we need to project the score function onto $\Lambda^{\perp}$. The following theorem provides a formula to project any function onto $\Lambda^{\perp}$ and thus contains the efficient score as a special case. 
\begin{theorem}\label{lemma:projection_multi}
	For any function $f(\epsilon, \bmath{X}, T)\in \mathcal{H}$, its projection onto $\Lambda^{\perp}$ is given by
	\begin{equation*}
	\bmath{w}_T^\top \left \{\epsilon-E(\epsilon|\bmath{X}) \right \}\bm C(\bmath{B}_0^\top\bmath{X}),
	\end{equation*}
	where 
	\begin{eqnarray*}
		{\bm C}(\bmath{B}_0^\top\bmath{X})&=&{\bm V}({\bm X})\big \{\, \bm D(\bmath{X})-E[{\bm V}({\bm X})|\bmath{B}_0^\top \bmath{X}]^{-1}E[{\bm V}({\bm X})\bm D(\bmath{X})|\bmath{B}_0^\top \bmath{X}]\, \big\},\\
		{\bm V}({\bm X})^{-1}
		&=&E(\bmath{w}_T\bmath{w}_T^\top \epsilon^2|\bmath{X})-E(\bmath{w}_T\bmath{w}_T^\top|\bmath{X})E(\epsilon|\bmath{X})^2, \\%
		\bm D(\bmath{X})&=&E(\bmath{w}_T f \epsilon|\bmath{X})-E(\bmath{w}_T f |\bmath{X})E(\epsilon|\bmath{X}).
	\end{eqnarray*}
\end{theorem}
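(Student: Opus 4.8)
The plan is to leverage the explicit description of $\Lambda^{\perp}$ from Theorem \ref{thm:nuissance_multi}. Every element of $\Lambda^{\perp}$ has the form $\bmath{w}_T^\top[\epsilon-E(\epsilon|\bmath{X})]\,\bm{r}(\bmath{X})$, where $\bm{r}=\bm{\alpha}-E\{\bm{\alpha}|\bmath{B}_0^\top\bmath{X}\}$ ranges over all $R^{K-1}$-valued functions satisfying $E[\bm{r}|\bmath{B}_0^\top\bmath{X}]=\bm 0$. Since a projection onto $\Lambda^{\perp}$ must itself lie in $\Lambda^{\perp}$, I would posit the projection in the form $\bmath{w}_T^\top[\epsilon-E(\epsilon|\bmath{X})]\,\bm C$ and reduce the theorem to identifying the coefficient $\bm C$. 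This $\bm C$ is pinned down by the two defining properties of an orthogonal projection onto a closed subspace: (i) membership, so that $\bm C$ is itself an admissible residual with $E[\bm C|\bmath{B}_0^\top\bmath{X}]=\bm 0$; and (ii) orthogonality, so that $h$ minus the posited projection is orthogonal to every $\phi\in\Lambda^{\perp}$.

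First I would evaluate the two inner products appearing in the orthogonality condition. Writing $\bar\epsilon=E(\epsilon|\bmath{X})$ and $\phi=\bmath{w}_T^\top[\epsilon-\bar\epsilon]\bm{r}$, conditioning on $\bmath{X}$ and extracting the $\bmath{X}$-measurable factors gives $E[h\phi]=E[\bm{r}^\top\bm D(\bmath{X})]$ with $\bm D$ exactly as stated, and $E[\{\bmath{w}_T^\top[\epsilon-\bar\epsilon]\bm C\}\phi]=E[\bm C^\top\bm V(\bmath{X})^{-1}\bm{r}]$, where $\bm V^{-1}$ is the conditional second moment of $\bmath{w}_T\bmath{w}_T^\top$ weighted by the centered error. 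Using the symmetry of $\bm V^{-1}$ and subtracting, orthogonality becomes $E[(\bm D-\bm V^{-1}\bm C)^\top\bm{r}]=0$ for every admissible $\bm{r}$.

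Next I would convert this integrated identity into a pointwise one. Using $\bm{r}=\bm{\alpha}-E\{\bm{\alpha}|\bmath{B}_0^\top\bmath{X}\}$ and the tower property, one finds $E[\bm{u}^\top\bm{r}]=E[(\bm{u}-E\{\bm{u}|\bmath{B}_0^\top\bmath{X}\})^\top\bm{\alpha}]$ for $\bm{u}=\bm D-\bm V^{-1}\bm C$; since $\bm{\alpha}$ is arbitrary, orthogonality is equivalent to $\bm{u}$ being a function of $\bmath{B}_0^\top\bmath{X}$ alone, say $\bm{u}=\bm{\lambda}(\bmath{B}_0^\top\bmath{X})$. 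Hence $\bm C=\bm V(\bm D-\bm{\lambda})$. Imposing the membership condition $E[\bm C|\bmath{B}_0^\top\bmath{X}]=\bm 0$ and using that $\bm{\lambda}$ is $\bmath{B}_0^\top\bmath{X}$-measurable yields $E[\bm V\bm D|\bmath{B}_0^\top\bmath{X}]=E[\bm V|\bmath{B}_0^\top\bmath{X}]\bm{\lambda}$, so $\bm{\lambda}=E[\bm V|\bmath{B}_0^\top\bmath{X}]^{-1}E[\bm V\bm D|\bmath{B}_0^\top\bmath{X}]$, where invertibility is guaranteed by the regularity assumptions. Substituting $\bm{\lambda}$ back reproduces the stated formula for $\bm C$; because the construction enforces both projection properties and the projection is unique, this identifies the claimed expression.

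The step I expect to be the main obstacle is the inner-product computation that produces the weight $\bm V^{-1}$ in the stated form $E(\bmath{w}_T\bmath{w}_T^\top\epsilon^2|\bmath{X})-E(\bmath{w}_T\bmath{w}_T^\top|\bmath{X})E(\epsilon|\bmath{X})^2$. Obtaining precisely this expression forces one to control the cross moment $E(\bmath{w}_T\bmath{w}_T^\top\epsilon|\bmath{X})$ that arises from expanding $(\epsilon-\bar\epsilon)^2$, and it is here that the conditional moment constraint (\ref{eq:condition_error_multi}) on $\epsilon$ must be used carefully. Beyond that, the only genuine bookkeeping is keeping strict track of which quantities are measurable with respect to $\bmath{X}$ versus $\bmath{B}_0^\top\bmath{X}$ when applying the tower property, which is what makes the conditional-expectation algebra in the third step go through.
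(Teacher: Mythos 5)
Your proof is correct, but it reaches the formula by a genuinely different route than the paper. The paper starts by writing $h$ explicitly as a sum of elements of $\Lambda_{\epsilon}$, $\Lambda_{\bmath{X}}$, $\Lambda_{\pi}$, $\Lambda_{\bm g}$, and $\Lambda^{\perp}$ (with unknown coefficient functions $\bm h_{\pi}$, $\bm h_{\bm g}$, $\bm \alpha$); it then computes $E(\bmath{w}_T h\epsilon|\bmath{X})$ and $E(\bmath{w}_T h|\bmath{X})$ term by term, observes that the $\bm h_{\pi}$ contributions cancel in $\bm D$, and solves for $\bm h_{\bm g}$ and thereby for the $\Lambda^{\perp}$-component, which is the projection. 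You never decompose $h$: you characterize the projection by its two defining properties (membership in $\Lambda^{\perp}$ and orthogonality of the residual to $\Lambda^{\perp}$), derive the normal equation $E[(\bm D-\bm V^{-1}\bm C)^\top \bm r]=0$ over all admissible $\bm r$, and solve it. The two computations end up structurally parallel --- your multiplier $\bm{\lambda}(\bmath{B}_0^\top\bmath{X})$ is exactly $-\bm h_{\bm g}$ in the paper's proof --- but your argument is more self-contained: it needs only Theorem \ref{thm:nuissance_multi} plus standard Hilbert-space projection theory, whereas the paper's decomposition tacitly assumes that $\Lambda_{\bmath{X}}+\Lambda_{\epsilon}+\Lambda_{\pi}+\Lambda_{\bm g}+\Lambda^{\perp}$ exhausts $\mathcal{H}$ with components of the stated functional forms. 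What the paper's route buys in exchange is the explicit identification of the other coordinates of $h$ (for instance its $\Lambda_{\pi}$-coordinate via $E(\bmath{w}_T h|\bmath{X})$), which is incidental to the theorem itself. Finally, the obstacle you flag does resolve cleanly, and both proofs need it: since each row of ${\bm\Omega}$ sums to zero and ${\bm\Omega}{\bm\Omega}^\top$ is invertible, the rows of ${\bm\Omega}$ span the full subspace of vectors in $R^K$ orthogonal to the all-ones vector, so \eqref{eq:condition_error_multi} forces $E(\epsilon|T=t,\bmath{X})$ to be the same for every $t$ and hence equal to $E(\epsilon|\bmath{X})$; this yields $E(\bmath{w}_T\bmath{w}_T^\top\epsilon|\bmath{X})=E(\epsilon|\bmath{X})E(\bmath{w}_T\bmath{w}_T^\top|\bmath{X})$, which is precisely the identity that collapses $E\left[\bmath{w}_T\bmath{w}_T^\top\left\{\epsilon-E(\epsilon|\bmath{X})\right\}^2\middle|\bmath{X}\right]$ to the stated $\bm V^{-1}$ (the paper uses the same identity implicitly, without comment, in deriving its moment formulas).
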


Note that $\bm C(\bmath{B}_0^\top\bmath{X})$ depends on ${\bm X}$, in addition to $\bmath{B}_0^\top\bmath{X}$. But we have suppressed it for notational simplicity. After setting $f$ as the score function in Theorem \ref{lemma:projection_multi}, we obtain the efficient score in the following corollary.

\begin{corollary}\label{thm:effcient_score_multi}
	The efficient score of $\bmath{B}$ is given by the vectorization of a $d\times p$ matrix whose $(i,j)$ coordinate is given by
	$$\bmath{w}_T^\top \left \{\epsilon-E(\epsilon|\bmath{X}) \right \}\bm C_{i,j}^*(\bmath{B}_0^\top\bmath{X}),$$ where
	\begin{align*}
	\bm C_{i,j}^*(\bmath{B}_0^\top\bmath{X})=&{\bm V}({\bm X})\left \{X_{j}-E[{\bm V}({\bm X})|\bmath{B}_0^\top \bmath{X}]^{-1}E[{\bm V}({\bm X})X_{j}\, |\,\bmath{B}_0^\top \bmath{X}]
	\right \} \times
	\partial_i{\bm g} (\bmath{B}_0^\top \bmath{X}),
	\end{align*}
	$X_{j}$ is the $j$th component of ${\bm X}$, and $\partial_i{\bm g} $ is the derivative of ${\bm g}$ with respect to its $i$th index.
\end{corollary}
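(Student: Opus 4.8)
The plan is to exploit the fact that the efficient score is, by definition, the projection of the ordinary score for $\bmath{B}$ onto $\Lambda^{\perp}$; since Theorem~\ref{lemma:projection_multi} already supplies a closed form for the projection of any $h\in\mathcal{H}$, it suffices to compute the score coordinate by coordinate and then substitute each coordinate for $h$ in that formula. First I would write down the log-likelihood from the display preceding Theorem~\ref{thm:nuissance_multi} and observe that only the $\eta_{\epsilon}$ factor depends on $\bmath{B}_0$. Writing $\ell\equiv\eta^{'}_{\epsilon,1}/\eta_{\epsilon}$ for the score of $\epsilon$ and differentiating $\epsilon=Y-{\bm\Omega}_{\cdot T}^\top({\bm\Omega}{\bm\Omega}^\top)^{-1}{\bm g}(\bmath{B}_0^\top\bmath{X})$ with respect to the entry of $\bmath{B}_0$ sitting in row $j$ and column $i$, the chain rule produces the score coordinate $S_{ij}=-\ell\,{\bm\Omega}_{\cdot T}^\top({\bm\Omega}{\bm\Omega}^\top)^{-1}\partial_i{\bm g}(\bmath{B}_0^\top\bmath{X})\,X_j\in\mathcal{H}$, the factor $X_j$ arising because the $j$th covariate is the coefficient of that entry inside the $i$th index.

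The crux of the argument is evaluating $\bm D(\bmath{X})$ of Theorem~\ref{lemma:projection_multi} at $h=S_{ij}$. I would condition first on $(\bmath{X},T)$ and invoke the two standard score identities $E(\ell\mid\bmath{X},T)=0$ and $E(\ell\epsilon\mid\bmath{X},T)=-1$, the latter obtained by integration by parts. The first identity kills the term $E(\epsilon\mid\bmath{X})E(\bmath{w}_T S_{ij}\mid\bmath{X})$ entirely, while the second reduces $E(\bmath{w}_T S_{ij}\epsilon\mid\bmath{X})$ to $X_j\,E[\bmath{w}_T{\bm\Omega}_{\cdot T}^\top\mid\bmath{X}]({\bm\Omega}{\bm\Omega}^\top)^{-1}\partial_i{\bm g}$. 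The key algebraic simplification is then $E[\bmath{w}_T{\bm\Omega}_{\cdot T}^\top\mid\bmath{X}]=\sum_{t=1}^{K}{\bm\Omega}_{\cdot t}{\bm\Omega}_{\cdot t}^\top={\bm\Omega}{\bm\Omega}^\top$, which cancels the inverse exactly and leaves the remarkably clean expression $\bm D(\bmath{X})=X_j\,\partial_i{\bm g}(\bmath{B}_0^\top\bmath{X})$.

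Finally I would substitute this $\bm D$ into the formula for $\bm C$ in Theorem~\ref{lemma:projection_multi}. Because $\partial_i{\bm g}(\bmath{B}_0^\top\bmath{X})$ is measurable with respect to $\bmath{B}_0^\top\bmath{X}$, it factors out of both conditional expectations $E[\,\cdot\mid\bmath{B}_0^\top\bmath{X}]$, so $\bm C$ collapses to ${\bm V}(\bmath{X})\{X_j-E[{\bm V}(\bmath{X})\mid\bmath{B}_0^\top\bmath{X}]^{-1}E[{\bm V}(\bmath{X})X_j\mid\bmath{B}_0^\top\bmath{X}]\}\,\partial_i{\bm g}(\bmath{B}_0^\top\bmath{X})$, which is exactly $\bm C^{*}_{i,j}$; reassembling the coordinates into the vectorized $d\times p$ matrix yields the stated efficient score. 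I expect the only genuinely delicate points to be the regularity needed for the integration-by-parts identity $E(\ell\epsilon\mid\bmath{X},T)=-1$ and the verification that $S_{ij}$ truly lies in $\mathcal{H}$ so that Theorem~\ref{lemma:projection_multi} applies; the cancellation $E[\bmath{w}_T{\bm\Omega}_{\cdot T}^\top\mid\bmath{X}]={\bm\Omega}{\bm\Omega}^\top$ is the single step where the structure of the problem, rather than routine calculus, does the real work.
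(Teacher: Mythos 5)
Your proposal is correct and follows essentially the same route as the paper's own proof: you compute the score coordinates $S_{i,j}=-\bigl(\eta'_{\epsilon,1}/\eta_{\epsilon}\bigr)\,{\bm\Omega}_{\cdot T}^\top({\bm\Omega}{\bm\Omega}^\top)^{-1}\partial_i{\bm g}(\bmath{B}_0^\top\bmath{X})X_j$, apply Theorem~\ref{lemma:projection_multi} with $h=S_{i,j}$, and use exactly the same identities $E(\eta'_{\epsilon,1}/\eta_{\epsilon}\mid\bmath{X},T)=0$, $E(\epsilon\,\eta'_{\epsilon,1}/\eta_{\epsilon}\mid\bmath{X},T)=-1$, and $E[\bmath{w}_T{\bm\Omega}_{\cdot T}^\top\mid\bmath{X}]={\bm\Omega}{\bm\Omega}^\top$ to reduce $\bm D(\bmath{X})$ to $X_j\,\partial_i{\bm g}(\bmath{B}_0^\top\bmath{X})$ and conclude. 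The only additions beyond the paper's argument are your explicit remarks on the regularity behind the integration-by-parts identity and on factoring $\partial_i{\bm g}$ out of the conditional expectations, both of which the paper leaves implicit.
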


In cases like clinical trials, $\pi_T({\bm X})$ may be {\em known}. In this case, there is no corresponding tangent space $\Lambda_{\pi}$ and the corresponding nuisance tangent space
$\tilde{\Lambda}\equiv \Lambda_{\bmath{X}}+\Lambda_{\epsilon}+\Lambda_{\bm g}$. Its
orthogonal complement $\tilde{\Lambda}^{\perp}$ is then larger and can be shown to be the sum of  $\Lambda^{\perp}$ and $\mathcal{S}_2$ defined in the Supplemental Materials. For any function $f(\epsilon, \bmath{X}, T)$, its projection on $\tilde{\Lambda}^{\perp}$ is its projection on ${\Lambda}^{\perp}$ plus an additional term $\bmath{w}_T^\top E(\bmath{w}_T\bmath{w}_T^\top|\bmath{X})^{-1}E(\bmath{w}_T f |\bmath{X})$. However the efficient score is unchanged as $E(\bmath{w}_T f |\bmath{X})=0$ when $f$ is chosen as the score function.

As a special case of Theorem \ref{lemma:projection_multi} and Corollary \ref{thm:effcient_score_multi}, when $K=2$,  we have the following corollaries, recognizing that $\bmath{w}_T =\pi_T(\bmath{X})^{-1} T$ now becomes a scalar.
\begin{corollary}\label{thm:ortho_space}
	For $K=2$ and $T\in \{-1, 1\}$, 
	\begin{equation*}
	\Lambda^{\perp}=\big \{ \, \pi_T({\bm X})^{-1} T \big[\alpha(\bmath{X})-E\{\alpha(\bmath{X})|\bmath{B}_0^\top \bmath{X}\} \big ] \left [\epsilon-E(\epsilon|\bmath{X})\right ], \forall \alpha(\bmath{X}):\mathcal{X}\mapsto R \, \big \}.
	\end{equation*}  
\end{corollary}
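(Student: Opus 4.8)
The plan is to obtain this statement as a direct specialization of Theorem~\ref{thm:nuissance_multi} to the case $K=2$, so the substantive work lies entirely in that theorem; here I only need to carry out the bookkeeping of the relabeling $\{1,2\}\mapsto\{-1,1\}$ and the collapse of all $(K-1)$-dimensional objects to scalars.

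First I would record the explicit form of the matrix $\bm\Omega$. For $K=2$ the paper fixes $\bm\Omega=(1,-1)$, a $1\times 2$ matrix, so that $\bm\Omega\bm\Omega^\top=2$ is trivially invertible and the row-sum constraint $\sum_{j}\bm\Omega_{ij}=0$ holds. Because $K-1=1$, every object that was vector-valued in the general theorem becomes a scalar: the weight $\bm w_T=\bm\Omega_{\cdot T}/\pi_T(\bm X)$, the auxiliary function $\bm\alpha(\bm X):\mathcal{X}\mapsto R^{K-1}$, and the transpose $\bm w_T^\top$ all reduce to scalars.

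Next I would verify that the column selection $\bm\Omega_{\cdot T}$ reduces to $T$ under the chosen labeling. Since the first column of $\bm\Omega$ (with entry $1$) is associated with the level $T=1$ and the second column (entry $-1$) with $T=-1$, evaluating $\bm\Omega_{\cdot T}$ at $T=1$ returns $1$ and at $T=-1$ returns $-1$; hence $\bm\Omega_{\cdot T}=T$ and consequently $\bm w_T=T/\pi_T(\bm X)=w_T$, matching the weight in the statement. This is the only place where the sign convention must be tracked carefully, and it is the single point at which a labeling error could creep in.

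Finally I would substitute these reductions into the characterization of $\Lambda^{\perp}$ from Theorem~\ref{thm:nuissance_multi}. The generic element $\bm w_T^\top[\epsilon-E(\epsilon|\bm X)][\bm\alpha(\bm X)-E\{\bm\alpha(\bm X)|\bm B_0^\top\bm X\}]$ becomes $w_T[\epsilon-E(\epsilon|\bm X)][\alpha(\bm X)-E\{\alpha(\bm X)|\bm B_0^\top\bm X\}]$ with $\alpha:\mathcal{X}\mapsto R$ arbitrary, and reordering the two scalar factors yields exactly the asserted form. No genuine obstacle arises; the entire content is the correct instantiation of the general result, and the main thing to double-check is that no spurious factor of $(\bm\Omega\bm\Omega^\top)^{-1}=1/2$ or sign is introduced in passing from the vector to the scalar case.
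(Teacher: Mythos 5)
Your proposal is correct and takes essentially the same route as the paper: the paper's Appendix proves the general Theorem~\ref{thm:nuissance_multi} and explicitly notes that the same derivation covers Corollary~\ref{thm:ortho_space} as the special case $K=2$, $T\in\{-1,1\}$, which is exactly your specialization with $\bm\Omega=(1,-1)$, $\bm\Omega_{\cdot T}=T$, and $\bm w_T=T/\pi_T(\bm X)=w_T$. Your explicit bookkeeping (scalar reduction of the $R^{K-1}$-valued objects and the check that no factor $(\bm\Omega\bm\Omega^\top)^{-1}=1/2$ survives into $\Lambda^{\perp}$) is precisely what the paper leaves implicit.
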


\begin{corollary}\label{lemma:projection}
	For $K=2$ and $T\in \{-1, 1\}$, the projection of any function $f(\epsilon, \bmath{X}, T)\in \mathcal{H} $ onto $\Lambda^{\perp}$ is given by
	$$
	\pi_T({\bm X})^{-1} T\, C(\bmath{B}_0^\top\bmath{X})\left \{\epsilon-E[\epsilon|\bmath{X}]\right \},
	$$
	where
	\begin{eqnarray*}
		C(\bmath{B}_0^\top\bmath{X})&=&V({\bm X})\left \{D(\bmath{X})-\frac{E[V({\bm X})D(\bmath{X})|\bmath{B}_0^\top \bmath{X}]}{E[V({\bm X})|\bmath{B}_0^\top \bmath{X}]}\right \} \\
		V(\bmath{X})^{-1} &=& E[\pi_T({\bm X})^{-2}\epsilon^2|\bmath{X}]-E[\pi_T({\bm X})^{-2}|\bmath{X}]E(\epsilon|\bmath{X})^2  \\
		D(\bmath{X})&=&E[	\pi_T({\bm X})^{-1} T f\epsilon|\bmath{X} ] - E[	\pi_T({\bm X})^{-1} T f|\bmath{X} ] E(\epsilon|\bmath{X}).
	\end{eqnarray*}
	Therefore, the efficient score is
	\begin{equation*}
		\pi_T({\bm X})^{-1} T\, \bm C^*(\bmath{B}_0^\top\bmath{X})\left \{\epsilon-E(\epsilon|\bmath{X})\right \},
	\end{equation*}
	where
	\begin{equation*}
	\bm C^*(\bmath{B}_0^\top\bmath{X})=V({\bm X})\, \nabla g(\bmath{B}_0^\top \bmath{X})\otimes \left \{\bmath{X}-\frac{E[V({\bm X})\bmath{X}|\bmath{B}_0^\top \bmath{X}]}{E[V({\bm X})|\bmath{B}_0^\top \bmath{X}]}\right \},
	\end{equation*}
	and $\otimes$ is Kronecker product. 
\end{corollary}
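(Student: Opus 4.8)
The plan is to obtain Corollary \ref{lemma:projection} as the $K=2$ specialization of the general projection formula in Theorem \ref{lemma:projection_multi} and the general efficient score in Corollary \ref{thm:effcient_score_multi}, rather than redoing the Hilbert-space projection onto $\Lambda^\perp$ from scratch. Since both general results are already available, the work reduces to tracking how the matrix-valued objects collapse when there is a single contrast.

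First I would record what the objects in Theorem \ref{lemma:projection_multi} become when $K=2$. With $T\in\{-1,1\}$ and $\bmath{\Omega}=(1,-1)$ we have $\bmath{\Omega}\bmath{\Omega}^\top=2$, so $(\bmath{\Omega}\bmath{\Omega}^\top)^{-1}=1/2$, and the $T$th column satisfies $\bmath{\Omega}_{\cdot T}=T$; hence $\bmath{w}_T=\bmath{\Omega}_{\cdot T}/\pi_T(\bmath{X})=T/\pi_T(\bmath{X})=w_T$ is a scalar, matching the $w_T$ of Corollary \ref{thm:ortho_space}. Consequently $\bmath{w}_T\bmath{w}_T^\top$ reduces to $w_T^2$, and the $(K-1)\times(K-1)$ quantities $\bmath{V},\bmath{D},\bmath{C}$ all collapse to scalars. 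Substituting into the definitions of Theorem \ref{lemma:projection_multi} gives $V(\bmath{X})^{-1}=E(w_T^2\epsilon^2|\bmath{X})-E(\epsilon|\bmath{X})^2E(w_T^2|\bmath{X})$, $D(\bmath{X})=E(w_Th\epsilon|\bmath{X})-E(\epsilon|\bmath{X})E(w_Th|\bmath{X})$, and, because in the scalar case $E[\bmath{V}|\cdot]^{-1}E[\bmath{V}\bmath{D}|\cdot]$ is simply the ratio, $C(\bmath{B}_0^\top\bmath{X})=V(\bmath{X})\{D(\bmath{X})-E[V(\bmath{X})D(\bmath{X})|\bmath{B}_0^\top\bmath{X}]/E[V(\bmath{X})|\bmath{B}_0^\top\bmath{X}]\}$. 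Since every factor is now scalar, the projection $\bmath{w}_T^\top\{\epsilon-E(\epsilon|\bmath{X})\}\bmath{C}$ becomes $w_T\,C(\bmath{B}_0^\top\bmath{X})\{\epsilon-E(\epsilon|\bmath{X})\}$, which is the first display of the corollary.

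For the efficient score I would specialize Corollary \ref{thm:effcient_score_multi}. When $K=2$ the contrast vector $\bmath{g}$ is a single function $g$, so $\partial_i\bmath{g}=\partial_i g$ and the $(i,j)$ coordinate of the efficient score is $w_T\{\epsilon-E(\epsilon|\bmath{X})\}\,C^*_{i,j}$ with $C^*_{i,j}=V(\bmath{X})\{X_j-E[V(\bmath{X})X_j|\bmath{B}_0^\top\bmath{X}]/E[V(\bmath{X})|\bmath{B}_0^\top\bmath{X}]\}\,\partial_i g(\bmath{B}_0^\top\bmath{X})$. The key observation is that this $d\times p$ array is rank one: the $i$-dependence enters only through $\partial_i g$ and the $j$-dependence only through the bracketed covariate term. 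I would then invoke the identity $\mathrm{vec}(uv^\top)=u\otimes v$ (for the row-stacking convention) to recognize that assembling $C^*_{i,j}$ over $(i,j)$ equals $V(\bmath{X})\,\nabla g(\bmath{B}_0^\top\bmath{X})\otimes\{\bmath{X}-E[V(\bmath{X})\bmath{X}|\bmath{B}_0^\top\bmath{X}]/E[V(\bmath{X})|\bmath{B}_0^\top\bmath{X}]\}$, which is the stated $\bmath{C}^*$.

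The main obstacle is bookkeeping rather than conceptual: one must fix the vectorization convention implicit in Corollary \ref{thm:effcient_score_multi} (the ``vectorization of a $d\times p$ matrix'') and confirm it is consistent with the Kronecker ordering $\nabla g\otimes\{\cdot\}$ claimed here, so that the $i$-index of $\nabla g$ and the $j$-index of the covariate bracket land in the correct slots. I would verify this by checking a single generic coordinate against $\partial_i g$ times the $j$th covariate term, since column-stacking would instead yield $\{\cdot\}\otimes\nabla g$. Once that convention is pinned down, everything else follows mechanically from the scalar reductions $\bmath{w}_T\to w_T$ and $(\bmath{\Omega}\bmath{\Omega}^\top)^{-1}\to 1/2$ established above.
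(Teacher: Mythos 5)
Your proposal is correct and follows essentially the same route as the paper: Appendix \ref{appC} states that Corollary \ref{lemma:projection} is simply the special case of Theorem \ref{lemma:projection_multi} and Corollary \ref{thm:effcient_score_multi} when $T=1$ or $-1$, which is exactly the specialization you carry out (with $\bmath{\Omega}_{\cdot T}=T$, $\bmath{w}_T=w_T$ scalar, and the matrix quantities $\bmath{V},\bmath{D},\bmath{C}$ collapsing to scalars). Your explicit bookkeeping of the scalar reductions and of the vectorization/Kronecker convention fills in details the paper leaves implicit, but it is the same argument.
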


\section{Estimation and algorithm}

We first consider estimation of $\bmath{B}_0$ with fixed $d$. Then we propose a method for determining $d$ similar to \citet{Xia2002}. For simplicity, we present our method with $K=2$. Generalization to $K>2$ is straightforward and relegated to the Supplemental Materials. From Corollary \ref{lemma:projection}, the efficiency score can be written as
	\begin{equation}
V({\bm X}) \frac{T}{\pi_T({\bm X})} \,  \nabla g(\bmath{B}_0^\top \bmath{X})\otimes \left \{\bmath{X}-\frac{E[V({\bm X})\bmath{X}|\bmath{B}_0^\top \bmath{X}]}{E[V({\bm X})|\bmath{B}_0^\top \bmath{X}]}\right \}\left \{\epsilon-E(\epsilon|\bmath{X})\right \}. \label{efficiency score form}
\end{equation}
We can see that the efficient score is hard to estimate directly due to many conditional expectations involved. We therefore use \eqref{efficiency score form}  to accomplish two tasks. 


The first task is to construct more practical and simplified estimation procedures by exploring the robustness of the efficient score \eqref{efficiency score form}.  In particular, \eqref{efficiency score form}  remains unbiased (for 0) by omitting the fraction ${E[V({\bm X})\bmath{X}|\bmath{B}_0^\top \bmath{X}]}/{E[V({\bm X})|\bmath{B}_0^\top \bmath{X}]}$ and the leading term $V({\bm X})$.  In addition, $\pi_T({\bm X})$ and  $E[\epsilon | \bmath X]$ form a pair for robustness in the sense that, if one is known or consistently estimated, the other can be mis-specified. This is the well-known double robustness property in semiparametric estimation \citep{Tsiatisbook}. Therefore we propose the following class of estimating equations are all unbiased for estimating $\bmath{B}_0$ under Model (\ref{eq:model2_multi}), \begin{equation*}
\tilde{S}=\big\{	\pi_T({\bm X})^{-1} T \nabla g(\bmath{B}_0^\top \bmath{X})\otimes \bmath{X} (\epsilon-\eta(\bmath{X})), \forall \eta(\bmath{X}):\mathcal{X}\mapsto R\big\}.
\end{equation*}
This will be our choice of estimating equations. The obvious benefit of using this function class $\tilde{S}$ is that solving the estimating equations is equivalent to minimizing the loss function $\pi_T({\bm X})^{-1}{\{Y-\frac{1}{2}Tg(\bmath{B}_0^\top \bmath{X})-\eta(\bmath{X})\}^2}$. The corresponding sample version is 
\begin{equation}\label{eq:loss}
L_g(\bmath{B})=\frac{1}{n}\sum_{i=1}^n\frac{\{Y_i-\frac{1}{2}T_ig(\bmath{B}^{T} \bmath{X}_i)-\eta(\bmath{X}_i)\}^2}{\pi_{T_i}(\bmath{X}_i)}.
\end{equation} 


The proposed loss function remains doubly robust in the sense that the minimizer of the proposed loss function is consistent if either $\pi_T(\bmath X)$ or $\eta(\bmath X)=E[\epsilon|\bmath X]$ is correctly specified. When $\pi_T(\bmath{X})$ is known or can be consistently estimated, the choice of $\eta(\bmath X)$ can be flexible. A convenient choice is $\eta(\bmath{X})=0$ adopted in \citet{Chen2017} and \citet{Tian2014}. Another choice is $\eta(\bmath{X})=\{1-2\pi(\bmath{X})\}g(\bmath{B}_0^\top \bmath{X})$ used by \citet{song2017}. {However, from the proof of Theorem \ref{thm:nuissance_multi} and Corollary \ref{thm:effcient_score_multi}}, $$\eta^*(\bmath{X})={E[\epsilon|\bmath{X}]}$$ leads to the most efficient estimator. 


Because $g$ is unknown, to estimate $\bmath{B}_0$ through minimizing $L_g(\bmath{B})$, we employ a  minimum average variance estimation (MAVE) type of method as advocated in \citet{Xia2002}. In particular minimization is based on the following approximating loss function:
\begin{align}\label{eq:tMAVE}
&L( \bmath{B}, \{a_j,  \bmath{b}_j\}_{j=1}^n) =\sum_{j=1}^n\sum_{i=1}^n\frac{\{Y_i-\frac{1}{2}T_i[a_j+ \bmath{b}_j^\top ( \bmath{B}^\top  \bmath{X}_i- \bmath{B}^\top  \bmath{X}_j)]-\eta(\bmath{X}_i)\}^2}{n^2 \, \pi_{T_i}(\bmath{X}_i)}w_{ij},
\end{align}
where $w_{ij}=K_h( \bmath{B}^\top  \bmath{X}_j- \bmath{B}^\top  \bmath{X}_i)$ and $K_h(\cdot)=\frac{1}{h^d}K(\cdot/h)$ is a kernel function with bandwidth $h$. The extra parameters $a_j \in R$ and ${\bm b}_j \in R^d$ can be thought of as approximations to $g$ and its gradient at each point $\bmath{B}^\top  \bmath{X}_j$, and the kernel weight $w_{ij}$ ensures the adequacy of the local linear approximation of $g$ in its neighborhood. We can also normalize the weight $w_{ij}$'s by $\tilde{w}_{ij}=w_{ij}/\sum_j w_{ij}$. 
In the next two subsections, we will consider both the case of fixing $\eta(\bmath{X})$ through a sensible or convenient choice and of estimating $\eta^*(\bmath{X})={E[\epsilon|\bmath{X}]}$. We term the two methods interaction MAVE (iMAVE) and iMAVE2 respectively.

	
{The second task is to use the variance of the efficient score  \eqref{efficiency score form}, or the efficiency bound, to evaluate our method. Obviously, our simplified method will lead to efficiency loss in general cases. However, if we further impose two assumptions
	\begin{enumerate}[label=(\alph*)]
		\item \label{cond:1} $\epsilon \perp T | {\bm X}$,  $Var(\epsilon|{\bm X})$ is a constant;
		\item \label{cond:2} $\pi_1(\bm X) \equiv \pi_1$, where $\pi_1$ is a constant,
	\end{enumerate}
Then the efficiency bound (based on the asymptotic variance of the efficient score)  is exactly the same as the variance of our iMAVE2 method derived in Theorem \ref{thm3} below. Therefore, iMAVE2 attains local efficiency under the above two assumptions.}


\subsection{The iMAVE method with a fixed $\eta(\bmath X)$}

In this section, an alternatively weighted least square algorithm to minimize (\ref{eq:tMAVE}) is introduced that consists of the following steps.
\begin{enumerate}
	\item An initial estimator, $\bmath{B}_{(1)}$, is obtained. Please see our comments after the algorithm on how to obtain $\bmath{B}_{(1)}$. 
	\item Let $\bmath{B}_{(t)}$ be the estimator at the $t$th iteration. Calculate $$w_{ij}^{(t)}=K_h( \bmath{B}_{(t)}^\top  \bmath{X}_i- \bmath{B}_{(t)}^\top  \bmath{X}_j).$$
	\item Solve the following weighted least square problem to obtain $$(a_j^{(t)},  \bmath{b}_j^{(t)})=\arg \min_{a_j,  \bmath{b}_j}L_1(a_j,  \bmath{b}_j),$$ for $j=1,\cdots, n$, where
	\begin{equation*}
	L_1(a_j,  \bmath{b}_j)=\frac{1}{n}\sum_{i=1}^n\frac{\{Y_i-\eta(\bmath{X}_i)-\frac{1}{2}T_i[a_j+ \bmath{b}_j^\top ( \bmath{B}_{(t)}^\top  \bmath{X}_i- \bmath{B}_{(t)}^\top  \bmath{X}_j)]\}^2}{\pi_{T_i}(\bmath{X}_i)}w_{ij}^{(t)}.
	\end{equation*}
	
	\item Solve the following weighted least square problem to obtain $$\tilde{\bmath{B}}_{(t+1)}=\arg\min_{\bmath{B}} L_2(\bmath{B}),$$ where
	\begin{align*}
	&L_2(\bmath{B})\\ \nonumber
	=&\frac{1}{n^2}\sum_{j=1}^n\sum_{i=1}^n\frac{\{Y_i-\eta(\bmath{X}_i)-\frac{1}{2}T_i[a_j^{(t)}+ {\bmath{b}_j^{(t)}}^\top ( \bmath{B}^\top  \bmath{X}_i- \bmath{B}^\top  \bmath{X}_j)]\}^2}{\pi_{T_i}(\bmath{X}_i)}w_{ij}^{(t)}.
	\end{align*}
	
	\item {Normalize to obtain $\bmath{B}_{(t+1)}$ by projecting $\tilde{\bmath{B}}_{(t+1)}$ onto the Grassmann manifold.}
	
	\item If the discrepancy, $|\bmath{B}_{(t+1)}-\bmath{B}_{(t)}|$, is smaller than a pre-specified tolerance, or a max number of iterations achieved, then output $\bmath{B}_{(t+1)}$. If not, go back to Step (2) and start a new iteration.
\end{enumerate}

The initial estimator $\bmath{B}_{(1)}$ needs to be a consistent estimator for our theoretical analysis. To get a consistent $\bmath{B}_{(1)}$, one choice is to solve a simplified version of \eqref{eq:tMAVE} by only expanding $g$ at $\bm 0$,
\begin{equation*}
L(\bmath{B})=\frac{1}{n}\sum_{i=1}^n\frac{\{Y_i-\frac{1}{2}T_i \bmath{B}^\top  \bmath{X}_i\}^2}{\pi_{T_i}(\bmath{X}_i)}\tilde{w}_{i0},
\end{equation*}
where
$\tilde{w}_{i0}=K_h(\bmath{B}^\top  \bmath{X}_i)$. For $d=1$, one can also utilize the method of \citet{song2017}. {In our simulation studies, we found that a simple choice of $\bmath{B}_{(1)}=\bmath{0}$ almost always led to stable convergent results.}


\subsection{The iMAVE2 method with an estimated $\eta^*(\bmath X)$}
\label{subsect:eff aug}
The following two-step procedure is proposed to estimate $\eta^*({\bm X})={E[\epsilon|\bmath{X}]}$. First, we obtain an estimate $\hat{\bm B}$ of $\bmath{B}_0$ with a pre-fixed $\eta$. Then $g(\bm B^\top\bm X)$ is estimated by 
\begin{equation}\label{eq:predict_g}
\hat{g}(\hat{\bm B}^\top\bmath{X})=\frac{\sum_{i=1}^{n} \pi_{T_i}(\bmath{X}_i)^{-1} {T_i}Y_iK_{h}(\hat{ \bmath{B}}^\top ( \bmath{X}_i- \bmath{X}))}{\sum_{i=1}^{n} K_{h}(\hat{ \bmath{B}}^\top ( \bmath{X}_i- \bmath{X}))},
\end{equation}
where $K_{h}$ is a kernel function with $K_{h}(\bmath{X})=h^{-d}K(\bmath{X}/h)$. The kernel $K$ and bandwidth $h$ can be different from those used before in \eqref{eq:tMAVE}. 

The estimated residual is $\hat{\epsilon}_i=Y_i-\frac{1}{2}T_i\hat{g}(\hat{\bmath{B}}^\top \bm X_i)$. We can then estimate 
$E[\epsilon|\bmath{X}]$, by
\begin{equation}\label{eq:estimation_mainEffect2}
\frac{\sum_{i=1}^n\hat{\epsilon}_i{K}_{h}(\bmath{X}_i-\bmath{X})}{\sum_{i=1}^n{K}_{h}(\bmath{X}_i-\bmath{X})},
\end{equation}
where $K_{h}$ is another kernel function with $K_{h}(\bmath{X})=h^{-p}K(\bmath{X}/h)$. Again, the kernel $K$ and bandwidth $h$ can be different from those used before.
On the other hand, noticing that $E[\pi_{T_i}(\bmath{X}_i)^{-2}|\bmath{X}]^{-1}= \pi_1(\bmath{X})\pi_{-1}(\bmath{X})$, $\eta^*$ can also be estimated by
\begin{equation}\label{eq:estimation_mainEffect}
\hat{\eta}^*({\bm X})=\pi_1 (\bmath{X})\pi_{-1}(\bmath{X})\frac{\sum_{i=1}^n \pi_{T_i}(\bmath{X}_i)^{-2}\hat{\epsilon}_i{K}_{h}(\bmath{X}_i-\bmath{X})}{\sum_{i=1}^n{K}_{h}(\bmath{X}_i-\bmath{X})}
\end{equation} 
With an estimated $\hat{\eta}^*$, a possibly improved estimator $\hat{ \bmath{B}}^*$ of ${\bm B}_0$ can be obtained. We call this efficiency improved estimation method iMAVE2.

Other approaches to obtain $\eta^*$ can also be considered. For example, it may be estimated from an external independent dataset or given directly through prior knowledge. 	When $\eta^*$  can not be estimated reliably, especially when the dimensionality of $\bmath{X}$ is high or when the sample size $n$ is small, as long as the estimator is a function of ${\bm X}$, the resulting $\hat{\bm B}^*$ is still unbiased in principle. Therefore instead of nonparametric estimators, parametric models may also be used to estimate $\eta^*$.



\subsection{Dimension determination}
\label{subsect:dimension determine}

There is a need to determine the dimension $d$, especially when $p$ is large. Many methods proposed in the dimension reduction literature are applicable in our setting too \citep{Koch2007, Schott1994, Cook1998}. In this paper, we adopt the same procedure as \citet{Xia2002}, which is a consistent procedure based on cross-validation. {In particular, because
	\begin{equation*}
	E\left[\frac{T}{\pi_T(\mathbf{X})}Y\middle \vert \mathbf{X}\right]=E\left [\frac{T}{\pi_T(\mathbf{X})}Y\middle \vert\mathbf{B}_0^\top \mathbf{X}\right ],
	\end{equation*}
	consistency of the dimension determination procedure can be established by a direct application of Theorem 2 in \citet{Xia2002}.}

Given a dimension $d\in \{0, 1, \cdots, p\}$, the procedure goes through the following steps based in iMAVE.
\begin{enumerate}
	\item Randomly split the dataset into five folds, and $\mathcal{I}_m, m=1,\cdots, 5$ are the index sets corresponding to these folds.
	\item For $m=1,\cdots, 5$, choose $\mathcal{I}_m$ as a testing set and the rest $\mathcal{I}_{-m}$ as a training data set. Fit iMAVE on $\mathcal{I}_{-m}$ to obtain estimates of $\hat{\bmath{B}}_{(-m)}$ and $\hat{g}_{(-m)}(\cdot)$. Then calculate the following score.
	\begin{equation*}
	CV(d, m)=\frac{1}{|\mathcal{I}_m|}\sum_{i\in \mathcal{I}_m}\left(\frac{1}{2}\frac{T_iY_i}{\pi_{T_i}(\bmath{X}_i)}-\hat{g}_{(-m)}(\hat{\bmath{B}}^\top \bmath{X}_i)\right)^2,
	\end{equation*}
	where $\hat{g}_{(-m)}(\cdot)$ is estimated using all other folds except the $m$th fold.
	\item  The estimated dimension is
	$
	\hat{d}=arg\min_{0\leq d\leq p} \sum_{m=1}^5 CV(d, m).
	$
\end{enumerate}

These same steps can be used based on iMAVE2 to determine the dimension. 
It is intuitively clear that over-estimating the true dimension $d$ to a slightly larger value is much less of a concern than under-estimating.


\section{Theoretical results}
\label{s:theory}

In this section, we analyze our estimator in a unified framework of statistical and algorithmic properties assuming a binary $T$ for notational simplicity. We study both iMAVE and iMAVE2. 

The non-convexity of (\ref{eq:tMAVE}) makes it intractable to obtain theoretical results for prediction or classification error by simply mimicking the usual analysis of empirical risk minimization \citep{Vapnikbook}. It is also hard to analyze the convergence rate or asymptotic distribution of the proposed estimators due to a lack of characterization of the minimizers. On the other hand, because we carry out our optimization by iteratively solving a weighted least square problem, we can track the change of each iteration similar to \citet{Xia2002} and \citet{Xia2007}. This leads us to propose a unified framework of joint statistical and algorithmic analysis.

For any matrix $\bmath{A}$, $|\bmath{A}|$ represents the Frobenius norm of $\bmath{A}$. For any random matrix $\bmath{A}_n$, we say $\bmath{A}_n=O_p(a_n)$ if each entry of $\bmath{A}_n$ is $O_p(a_n)$. Let $\bmath{B}_{(t)}$ be the estimator in the $t$th iteration of the iMAVE algorithm, and $\hat{\bmath{B}}$ be the limit of $\bmath{B}_{(t)}$ when $t\to +\infty$. The existence of the limit of $\bmath{B}_{(t)}$ as well as the convergence of the algorithm, similar to \cite{Xia2007}, can be concluded from the proof. Denote $\delta_{\bmath{B}}^{(t)}=|\bmath{B}_{(t)}-\bmath{B}_0|$. Our goal is to answer the following questions for both iMAVE and iMAVE2:
\begin{enumerate}
	\item Suppose that $\delta_{\bmath{B}}^{(1)}$ has some convergence rate to $0$. After $t$ iterations, what is the convergence rate of $\delta_{\bmath{B}}^{(t)}$? \label{Q1}
	\item What is the convergence rate of $\delta_{\hat{\bmath{B}}}\equiv |\hat{ \bmath{B}}-\bmath{B}_0|$? \label{Q2}
	
	\item What are the answers for Questions 1 and 2 when iMAVE2 is used.  \label{Q3}
	
	\item Whether there is asymptotic efficiency gain of iMAVE2 compared with iMAVE? \label{Q4}
\end{enumerate}
Questions \ref{Q1} and \ref{Q2} are answered by Theorems \ref{thm2} and \ref{thm1}, respectively. Question \ref{Q3} is answered by Theorem \ref{thm:efficient}. Question \ref{Q4} is answered by Theorems~\ref{thm3} and \ref{thm:efficient}. 

Theorem~\ref{thm2} is a new result beyond \citet{Xia2002} and \citet{Xia2007}. It essentially quantifies the non-asymptotic property of our estimators. It implies that under certain conditions, $\delta_{\bmath{B}}^{(t)}$ converges to $0$ with a rate at least $(n/\log n)^{-1/2}$ almost surely when $t$ is large enough and $d\leq 5$. When $d>5$, the convergence rate is bounded by a quantity related to bandwidth and $d$, and slower than $(n/\log n)^{-1/2}$. Theorem~\ref{thm1} implies that under certain conditions, $\delta_{\hat{\bmath{B}}}$ converges to $0$ in probability with the order of $n^{-1/2}$ when $d\leq 5$. When $d>5$, the convergence rate is slower than $n^{-1/2}$. The convergence rate in Theorem~\ref{thm1} is different than that in Theorem~\ref{thm2} by a factor of $\log n$ due to the difference of convergence modes. Theorem~\ref{thm2} provides deeper results with both statistical and algorithmic properties.

Theorems~\ref{thm3} and \ref{thm:efficient} provide the asymptotic distributions of iMAVE and iMAVE2 estimators, respectively. Theorem \ref{thm6} provides the accuracy of estimating $g$ based on $\hat{\bmath{B}}$. Combining with the previous results in Section \ref{s:method}, we will see that difference of the asymptotic covariance matrices of iMAVE and iMAVE2 is always positive semi-definite. Thus, iMAVE2 is more efficient than iMAVE.

The conditions needed for our theorems are as follows.  Let $\xi_{\bmath{B}}(\bmath{u})=E (\bmath{X}\bmath{X}^\top| \bmath{B}^\top \bmath{X}=\bmath{u})$ and $\mu_{\bmath{B}}(\bmath{u})\equiv E(\bmath{X}| \bmath{B}^\top  \bmath{X}=\bmath{u})$. We denote the distribution of $ \bmath{B}^\top \bmath{X}$ as $p_{\bmath{B}}( \bmath{B}^\top \bmath{x})$.
\begin{enumerate}
	\item[(C.1)] The density of $\bmath{X}$, $p_{\bmath{X}}(\bmath{x})$,  has bounded 4th order derivatives and compact support. $\mu_{ \bmath{B}}(\bmath{u})$ and $\xi_{ \bmath{B}}(\bmath{u})$ have bounded derivatives with respect to $  \bmath{u}$ and $ \bmath{B}$ where $ \bmath{B}$ is in a small neighborhood of $ \bmath{B}_0:| \bmath{B}- \bmath{B}_0|\leq \delta$, for some $\delta>0$.
	
	\item[(C.2)] The matrix $\bmath{M}_0=\int \nabla g(\bmath{B}_0^\top \bmath{x})\nabla^\top g(\bmath{B}_0^\top \bmath{x})\times p_{\bmath{B}_0}(\bmath{B}_0^\top \bmath{x})p_{\bmath{X}}(\bmath{x})d\bmath{x}$ has full rank $d$.
	
	\item[(C.3)] $K(\cdot)$ is a spherical symmetric univariate density function with a bounded 2nd order derivative and compact support.
	
	\item[(C.4)] $g$ has a bounded derivative. The error $\epsilon$ satisfies that there exists some $M$ and $\nu_0\in [0,+\infty)$ such that
	\begin{equation*}\label{cond:epsilon}
	E\left \{\exp \bigg [\frac{T\epsilon}{\pi_T({\bm X}) M} \bigg ]-1- \frac{|T\epsilon|}{\pi_T({\bm X}) M} \, \big| \, \bmath{X}\right \} M^2\leq \nu_0/2.
	\end{equation*}

	\item[(C.5)] The bandwidth $h_1=c_1n^{-r_h}$, where $0<r_h\leq 1/\{\max(p,3)+6\}$. For $t\geq 2$, $h_t=\max\{n^{-r_h/2}h_{t-1},\hbar\}$, where  $\hbar=c_3n^{-r'_h}$ with $0<r'_h\leq 1/(d+3)$. Here $c_1$ - $c_4$ are constants.
	
	\item[(C.6)] $p_{\bmath{B}}( \bmath{B}^\top \bm x)$ is bounded away from 0. In addition, $E[\pi_T({\bm X})^{-1}TY|\bmath{B}^\top \bmath{X}=\bmath{u}]$ is Lipschitz continuous {and $\pi_T({\bm X})$ is bounded away from 0 and 1}.
\end{enumerate}

Condition (C.6) is only needed for Theorem \ref{thm6}. Conditions (C.1) - (C.5) are similar to \citet{Xia2007} except the requirement for compact support of covariates. This requirement is needed for iMAVE2 because $g$ needs to be estimated to a certain rate for the asymptotic property of iMAVE2. For iMAVE, this requirement can be replaced by a finite moment condition. Epanechnikov and quadratic kernels satisfy Condition (C.3). The Gaussian kernel can also be used to guarantee our theoretical results with some modification to the proofs. According to \citet{Xia2007}, Condition (C.2) suggests that the dimension $d$ can not be further reduced. 
The bandwidth requirement in Condition (C.5) can be easily met. Condition (C.6) characterizes the smoothness of $g$ as typically required for conditional expectation estimation.


\begin{theorem}\label{thm2}
	Under Conditions (C.1) - (C.5), suppose that the initial estimator for iMAVE, $\bmath{B}_{(1)}$, satisfies $\delta_{\bmath{B}}^{(1)}/h_1\to 0$, if $n$ is large enough, then there exists a constant $C_1$ such that when the number of iterations $t$ satisfies $$t\geq 1+\log \min\left \{\frac{3C_1\{\delta_n+\delta_{d\hbar}^2\hbar+\hbar^4\}}{\delta_{\bmath{B}}^{(1)}+2C_1h_1^4},1\right \} \bigg /\log \frac{2}{3},$$ we have $
	\delta_{\bmath{B}}^{(t)}\leq (3C_1+1)\{\delta_n+\delta_{d\hbar}^2\hbar+\hbar^4\} \text{almost surely},
	$
	where $\delta_n=(n/\log n)^{-1/2}$ and $\delta_{d\hbar}=(n\hbar^d/\log n)^{-1/2}$.
\end{theorem}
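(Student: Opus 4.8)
The plan is to reduce the theorem to a single one-step contraction inequality and then unroll it. Concretely, I would show that, almost surely and on the event that $\bmath{B}_{(t)}$ remains in a fixed neighborhood of $\bmath{B}_0$ with $\delta_{\bmath{B}}^{(t)}/h_t$ small,
$$\delta_{\bmath{B}}^{(t+1)} \le \tfrac{2}{3}\,\delta_{\bmath{B}}^{(t)} + C_1 h_t^4 + C_1\bigl\{\delta_n + \delta_{d\hbar}^2\hbar\bigr\}.$$
Introducing the Lyapunov quantity $V_t = \delta_{\bmath{B}}^{(t)} + 2C_1 h_t^4$ and using the bandwidth recursion $h_t=\max\{r_nh_{t-1},\hbar\}$ of (C.5) (with $r_n^4<1/3$ for large $n$) to absorb the bias term, this becomes $V_{t+1}\le \tfrac{2}{3}V_t + C_1\{\delta_n+\delta_{d\hbar}^2\hbar+\hbar^4\}$. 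Summing the geometric series yields a fixed-point level $3C_1\{\delta_n+\delta_{d\hbar}^2\hbar+\hbar^4\}$, while the transient $(2/3)^{t-1}V_1=(2/3)^{t-1}(\delta_{\bmath{B}}^{(1)}+2C_1h_1^4)$ falls below that level precisely once $t$ exceeds the stated threshold; this matching is exactly what produces both the iteration count and the constant $(3C_1+1)$. The hypothesis $\delta_{\bmath{B}}^{(1)}/h_1\to0$ together with (C.5) is used to guarantee that $\delta_{\bmath{B}}^{(t)}/h_t$ stays small at every iteration, so that the local-linear regime underlying the recursion is never left.

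To derive the one-step inequality I would first analyze Step 3. Each pair $(a_j^{(t)},\bmath{b}_j^{(t)})$ is an explicit kernel-weighted least-squares solution, so I would write it in closed form and expand it around the truth $\bigl(g(\bmath{B}_0^\top\bmath{X}_j),\nabla g(\bmath{B}_0^\top\bmath{X}_j)\bigr)$, separating a deterministic approximation part from a stochastic part. For the deterministic part, (C.1) and (C.4) supply smoothness of $g$, $\mu_{\bmath{B}}$ and $\xi_{\bmath{B}}$, while the spherical symmetry of $K$ in (C.3) annihilates the leading odd moments, leaving a bias of order $h_t^2$ in the fitted value and, after the cancellations that feed into the direction update, an $h_t^4$ contribution; the direction error enters this part at order $\delta_{\bmath{B}}^{(t)}$. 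For the stochastic part, the conditional-subgaussian tail in (C.4) lets me apply a Bernstein-type inequality to the kernel-weighted residual sums and, after a covering argument over $\bmath{B}$ in the $\delta$-neighborhood together with Borel--Cantelli, obtain almost-sure uniform (over $j$ and over $\bmath{B}$) bounds of orders $\delta_n$ and $\delta_{d\hbar}$; this is the step that supplies the ``almost surely'' mode and the $\log n$ factors.

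Next I would treat Step 4 and the normalization. Plugging the Step-3 estimates into $L_2$, the minimizer $\tilde{\bmath{B}}_{(t+1)}$ solves a linear normal equation; I would linearize it about $\bmath{B}_0$ and use the full-rank condition (C.2) on $\bmath{M}_0$ to invert the leading coefficient matrix, which is what makes the update a well-defined, stable map. Isolating the leading term shows that the component of $\bmath{B}_{(t)}-\bmath{B}_0$ lying in the estimable directions is contracted, while the orthogonal (unidentified) component is absorbed by the Grassmann normalization in Step 5; quantifying this contraction is where the factor strictly below $1$ (bounded by $2/3$ for large $n$) appears, and one checks that projecting $\tilde{\bmath{B}}_{(t+1)}$ back onto the manifold inflates the error only at higher order. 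Collecting the bias of order $h_t^4$, the stochastic terms of orders $\delta_n$ and $\delta_{d\hbar}^2\hbar$, and the contracted direction error gives exactly the one-step inequality above.

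The main obstacle I anticipate is the simultaneous control of the contraction constant and the coupling between the shrinking bandwidth and the shrinking direction error. Establishing the contraction requires showing that the second-order (direction-error-squared) and bias contributions to the $\bmath{B}$-update are genuinely dominated by the linear contraction term uniformly over the iterations, which forces one to prove that $\delta_{\bmath{B}}^{(t)}/h_t$ stays small for every $t$ and not merely at $t=1$; because both quantities decrease, this is a delicate induction relying on the precise rate $r_n=n^{-r_h/2}$ and the floor $\hbar$ in (C.5). A secondary difficulty is promoting the in-probability kernel-smoothing bounds to almost-sure uniform bounds via Bernstein plus Borel--Cantelli under only (C.4), since the weights $w_T=T/\pi_T(\bmath{X})$ and the residuals $\epsilon$ must be handled jointly inside the exponential-moment estimate.
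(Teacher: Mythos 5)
Your proposal takes essentially the same route the paper does: as announced in Section 5, the paper's proof also tracks the alternating weighted-least-squares iteration as a one-step contraction in the spirit of Xia (2002, 2007) --- kernel bias of order $h_t^4$ via the symmetry in (C.3), almost-sure uniform stochastic bounds of orders $\delta_n$ and $\delta_{d\hbar}^2\hbar$ from exponential inequalities under the subgaussian condition (C.4) plus covering and Borel--Cantelli, the bandwidth recursion (C.5) to absorb the bias, invertibility from (C.2), and an induction keeping $\delta_{\bmath{B}}^{(t)}/h_t$ small --- followed by geometric unrolling that produces the iteration threshold and the limiting error level. The one caveat is constant bookkeeping: unrolling your recursion $V_{t+1}\le \tfrac{2}{3}V_t + C_1\{\delta_n+\delta_{d\hbar}^2\hbar+\hbar^4\}$ under the theorem's stated threshold $(2/3)^{t-1}V_1\le 3C_1\{\delta_n+\delta_{d\hbar}^2\hbar+\hbar^4\}$ gives a level near $6C_1\{\cdot\}$ rather than $(3C_1+1)\{\cdot\}$, so the constants in your one-step inequality (or the Lyapunov function) need slight retuning to reproduce the exact statement, a quantitative adjustment rather than a conceptual gap.
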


A simple observation from Theorem \ref{thm2} implies that to reach the same accuracy when $d$ increases, the number of iterations required is increasing linearly in $d$. This provides a useful guidance on the maximum number of iterations for the algorithm.

\begin{theorem}\label{thm1}
	Under the same conditions as Theorem~\ref{thm2}, there exists a matrix $\bmath{B}_0^{\perp}$ whose column space is the orthogonal complement of the column space of $\bmath{B}_0$, such that the iMAVE estimator satisfies 
	$$
	\hat{\bmath{B}}=\bmath{B}_0\, \{\bm I_d+O_p(\hbar^4+\delta_{d\hbar}^2+n^{-1/2})\}+\bmath{B}_0^{\perp}\, O_p(\hbar^4+\delta_{d\hbar}^2+n^{-1/2}).
	$$
\end{theorem}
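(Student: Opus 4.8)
The plan is to characterize $\hat{\bmath{B}}$ directly as the fixed point of the update map in Steps 3--5 and to analyze its stationarity condition, thereby sharpening the almost-sure bound of Theorem~\ref{thm2} into an in-probability statement together with the finer range/orthogonal decomposition. From the proof of Theorem~\ref{thm2} the limit $\hat{\bmath{B}}=\lim_{t\to\infty}\bmath{B}_{(t)}$ exists and already lies in the $\delta$-neighborhood of $\bmath{B}_0$ where Condition (C.1) applies. The first step is to write the first-order stationarity equation for $\hat{\bmath{B}}$ obtained by differentiating $L_2(\bmath{B})$ at the limit, with $(a_j,\bmath{b}_j)$ replaced by their converged values from $L_1$. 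This yields an implicit equation of the schematic form $\hat{\bmath{B}}\,\hat{\bm\Sigma}=\hat{\bm R}$, in which $\hat{\bm\Sigma}$ concentrates (up to known constants) around $\bmath{M}_0$ and $\hat{\bm R}$ collects a smoothing-bias piece and a stochastic piece.

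Next I would fix $\bmath{B}_0^{\perp}$ with orthonormal columns spanning the orthogonal complement of the column space of $\bmath{B}_0$ and write $\hat{\bmath{B}}=\bmath{B}_0\hat{\bm C}+\bmath{B}_0^{\perp}\hat{\bm D}$, where $\hat{\bm C}=\bmath{B}_0^{\top}\hat{\bmath{B}}$ and $\hat{\bm D}=\bmath{B}_0^{\perp\top}\hat{\bmath{B}}$. Because $\bmath{B}$ is identified only through its column space (the Grassmann constraint), the within-range part $\hat{\bm C}$ carries only rotational non-identifiability, while $\hat{\bm D}$ is the genuine estimation error to be controlled. Projecting the stationarity equation onto $\bmath{B}_0^{\perp}$ isolates a relation of the form $\hat{\bm D}=\bm A\,\hat{\bm D}+\bm b+\bm e$, where $\bm A$ is a contraction whose spectral norm is bounded away from $1$ --- this is exactly where Condition (C.2), the full-rankness of $\bmath{M}_0$, enters --- and $\bm b$, $\bm e$ are the bias and stochastic errors. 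Solving gives $\hat{\bm D}=(\bm I-\bm A)^{-1}(\bm b+\bm e)$, so it remains to bound $\bm b$ and $\bm e$.

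For the bias I would expand $g$ by Taylor's theorem inside each kernel neighborhood and use the spherical symmetry of $K$ in Condition (C.3) to annihilate the odd-order terms; the surviving leading contribution is $O(\hbar^2)$ but lies asymptotically in $\mathrm{range}(\bmath{B}_0)$ and is therefore killed by the projection onto $\bmath{B}_0^{\perp}$, leaving $\bm b=O(\hbar^4)$. For the stochastic term I would split $\bm e$ into a parametric average of mean-zero summands, controlled at rate $O_p(n^{-1/2})$ by a central limit argument (using the subgaussian tail from Condition (C.4)), and a nonparametric piece arising from estimating the local slopes $\bmath{b}_j$, controlled at the $\delta_{d\hbar}^2$-type rate. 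The crucial point is that, because the present conclusion is in probability rather than almost surely, I may invoke CLT/U-statistic bounds in place of the uniform strong laws (of iterated-logarithm type) that forced the extra $\sqrt{\log n}$ factors in Theorem~\ref{thm2}; this is precisely what replaces $\delta_n=(n/\log n)^{-1/2}$ by $n^{-1/2}$. Combining yields $\hat{\bm D}=\bmath{B}_0^{\perp\top}\hat{\bmath{B}}=O_p(\hbar^4+\delta_{d\hbar}^2+n^{-1/2})$.

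Finally, for the within-range part, Step~5 normalizes so that $\hat{\bmath{B}}^{\top}\hat{\bmath{B}}=\bm I_d$; with orthonormal $\bmath{B}_0$ and $\bmath{B}_0^{\perp}$ this gives $\hat{\bm C}^{\top}\hat{\bm C}+\hat{\bm D}^{\top}\hat{\bm D}=\bm I_d$, so $\hat{\bm C}^{\top}\hat{\bm C}=\bm I_d-O_p(\hbar^4+\delta_{d\hbar}^2+n^{-1/2})^2$, whence $\hat{\bm C}=\bm I_d+O_p(\hbar^4+\delta_{d\hbar}^2+n^{-1/2})$ after absorbing the harmless orthogonal rotation into the choice of representative for $\bmath{B}_0$. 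Substituting $\hat{\bm C}$ and $\hat{\bm D}$ into $\hat{\bmath{B}}=\bmath{B}_0\hat{\bm C}+\bmath{B}_0^{\perp}\hat{\bm D}$ gives the claimed expansion. I expect the main obstacle to be the bias analysis: establishing rigorously that the leading $O(\hbar^2)$ smoothing bias projects into $\mathrm{range}(\bmath{B}_0)$ and is therefore invisible to $\hat{\bm D}$, so that the orthogonal error inherits the faster $\hbar^4$ rate. This rests on the interplay among the kernel symmetry, the local-linear MAVE construction, and the exact structure of $\bmath{M}_0$, and it is the step where a naive bound would wrongly leave an $\hbar^2$ term.
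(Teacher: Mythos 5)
Your overall strategy --- characterizing $\hat{\bmath{B}}$ through the stationarity (fixed-point) equation of the iterative weighted least squares, splitting the error into components along the column space of $\bmath{B}_0$ and its orthogonal complement, using kernel symmetry so that the $O(\hbar^2)$ smoothing bias does not contaminate the orthogonal component, and replacing strong-law bounds by CLT-type bounds to shed the $\log n$ factor that distinguishes Theorem~\ref{thm1} from Theorem~\ref{thm2} --- is the same joint statistical and algorithmic analysis the paper carries out in the spirit of \citet{Xia2002} and \citet{Xia2007}, and your treatment of the orthogonal component is sound in outline. One point of precision: the mechanism that removes the $O(\hbar^2)$ bias from the orthogonal directions is not that this bias ``lies in the range of $\bmath{B}_0$,'' but that its orthogonal projection enters the estimating equation multiplied by the conditionally centered covariate $\nu_{\bmath{B}_0}(\bmath{X})=\mu_{\bmath{B}_0}(\bmath{B}_0^\top \bmath{X})-\bmath{X}$, whose conditional mean given $\bmath{B}_0^\top \bmath{X}$ is zero; this is exactly the $\nu_{\bmath{B}_0}$ that reappears in $\bmath{\Sigma}_0$ of Theorem~\ref{thm3}.

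The genuine gap is your handling of the within-range component. The theorem fixes $\bmath{B}_0$ (the existential quantifier ranges only over $\bmath{B}_0^{\perp}$), and Theorem~\ref{thm2} already shows the iterates converge to $\bmath{B}_0$ itself, so there is no ``harmless orthogonal rotation'' left to absorb into the choice of representative: you must prove $\bmath{B}_0^\top\hat{\bmath{B}}=\bm I_d+O_p(\hbar^4+\delta_{d\hbar}^2+n^{-1/2})$ for this fixed $\bmath{B}_0$. Your normalization argument only yields $\hat{\bm C}^\top\hat{\bm C}=\bm I_d+O_p(\rho^2)$ with $\rho=\hbar^4+\delta_{d\hbar}^2+n^{-1/2}$, i.e.\ that $\hat{\bm C}$ is $O_p(\rho^2)$-close to \emph{some} orthogonal matrix $\bm Q$; it says nothing about how close $\bm Q$ is to $\bm I_d$. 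The only other control you invoke is Theorem~\ref{thm2}, but its rate is $\delta_n+\delta_{d\hbar}^2\hbar+\hbar^4$ with $\delta_n=(n/\log n)^{-1/2}=n^{-1/2}\sqrt{\log n}$, which is strictly worse than $\rho$ precisely in the regime where Theorem~\ref{thm1} has content: whenever $\hbar^4+\delta_{d\hbar}^2=o(n^{-1/2})$ (the hypothesis of Theorem~\ref{thm3}, attainable e.g.\ for $d=1$ with $\hbar\asymp n^{-1/5}$), one has $\rho\asymp n^{-1/2}$ while $\delta_n/\rho\asymp\sqrt{\log n}\to\infty$. So the within-space rate can be inherited neither from the normalization step nor from Theorem~\ref{thm2}. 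The fix is to do for the range component exactly what you did for the orthogonal one: project the stationarity equation onto the column space of $\bmath{B}_0$ as well, and run the same bias-plus-CLT analysis there, obtaining the two components of the expansion simultaneously rather than deducing one from the other.
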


Theorem~\ref{thm1} implies that when $\hat{\bmath{B}}$ is decomposed based on the column space of $\bmath{B}_0$ and its orthogonal complement, the component in the column space of $\bmath{B}_0^{\perp}$ converges to $0$, and the projection of $\hat{\bmath{B}}$ on the column space of $\bmath{B}_0$ converges to $\bmath{B}_0$. To obtain the $n^{-1/2}$ convergence rate, we need $\hbar^4+\delta_{d\hbar}^2=O(n^{-1/2})$. In this case, $d$ has to be smaller than $5$.

\begin{theorem}\label{thm3}
	Assume the same conditions as Theorem \ref{thm2} and  $\hbar^4+\delta_{d\hbar}^2=o_p(n^{-1/2})$. Denote $\nu_{\bmath{B}}(\bmath{x}) \equiv \mu_{\bmath{B}}(\bmath{B}^\top  \bmath{x})-\bmath{x}$. Let $l(\hat{\bmath{B}})$ and $l(\bmath{B}_0)$ be vectorizations of the matrices $\hat{\bmath{B}}$ and $\bmath{B}_0$, respectively. Then
	\begin{equation*}
	\sqrt{n}\{l(\hat{\bmath{B}})-l(\bmath{B}_0)\}\to N(0,\bmath{D}_0^+\bmath{\Sigma}_0\bmath{D}_0^+),
	\end{equation*}
	where $\bmath{\Sigma}_0=Var\left [\pi_{T_i}(\bmath{X}_i)^{-1} {T_i}\nabla g(\bmath{B}_0^\top  \bmath{X}_i)\otimes \nu_{\bmath{B}_0}(\bmath{X}_i)\{\epsilon_i-\eta(\bmath{X}_i)\}\right ]$. The expression of $\bmath{D}_0^+$ can be found in our proof of this theorem from the Supplemental Materials.
\end{theorem}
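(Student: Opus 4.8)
The plan is to establish an asymptotically linear (Bahadur-type) representation for $\hat{\bmath{B}}$ and then apply a central limit theorem. The starting point is Theorem~\ref{thm1}: under the extra undersmoothing assumption $\hbar^4+\delta_{d\hbar}^2=o_p(n^{-1/2})$, the rate there collapses to $\hat{\bmath{B}}-\bmath{B}_0=O_p(n^{-1/2})$, so the whole analysis can be localized to an $n^{-1/2}$-neighborhood of $\bmath{B}_0$. Since $\hat{\bmath{B}}$ is the fixed point of the alternating weighted-least-squares iteration, it satisfies the self-consistency (stationarity) condition obtained by differentiating $L_2(\bmath{B})$ from~\eqref{eq:tMAVE} with the profiled local coefficients $(\hat a_j,\hat{\bmath{b}}_j)$ plugged in. I would write this as an estimating equation $\frac{1}{n}\sum_{i=1}^n\bm\psi_i(\hat{\bmath{B}})=o_p(n^{-1/2})$ and expand it about $\bmath{B}_0$.

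Next I would control the two nuisance pieces entering $\bm\psi_i$. First, the local linear fits must be shown to be consistent, $\hat a_j=g(\bmath{B}_0^\top\bmath{X}_j)+o_p(1)$ and $\hat{\bmath{b}}_j=\nabla g(\bmath{B}_0^\top\bmath{X}_j)+o_p(1)$, with bias of order $\hbar^2$ and stochastic error of order $\delta_{d\hbar}$, both asymptotically negligible after the assumed undersmoothing. Second, and this is the conceptual crux, the local intercept $a_j$ absorbs the kernel-weighted average of $\bmath{X}$ along the index direction, whose probability limit is $\mu_{\bmath{B}_0}(\bmath{B}_0^\top\bmath{X})=E[\bmath{X}\mid\bmath{B}_0^\top\bmath{X}]$. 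Passing to this limit replaces $\bmath{X}$ by the centered quantity $\nu_{\bmath{B}_0}(\bmath{X})=\mu_{\bmath{B}_0}(\bmath{B}_0^\top\bmath{X})-\bmath{X}$, which is exactly the centering appearing in $\bmath{\Sigma}_0$. The leading term of the expansion then becomes $\frac{1}{\sqrt n}\sum_{i=1}^n w_{T_i}\nabla g(\bmath{B}_0^\top\bmath{X}_i)\otimes\nu_{\bmath{B}_0}(\bmath{X}_i)\{\epsilon_i-\eta(\bmath{X}_i)\}$, a sum of i.i.d.\ mean-zero vectors (mean zero because $E[w_T\mid\bmath{X}]=0$ and $E[w_T\epsilon\mid\bmath{X}]=0$ by~\eqref{eq:condition_error}). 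A Lindeberg central limit theorem, whose moment requirement is supplied by the conditional subgaussianity in Condition (C.4), gives convergence to $N(\bm 0,\bmath{\Sigma}_0)$.

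For the derivative term I would show that the negative Jacobian $\partial_{\bmath{B}}[\frac{1}{n}\sum_i\bm\psi_i]$ converges in probability to a deterministic matrix $\bmath{D}_0$, whose explicit form (built from $\bmath{M}_0$ in Condition (C.2) together with the kernel moments) is recorded in the supplement. Because $\bmath{B}_0$ is identified only up to its column space (the Grassmann-manifold constraint), the directions corresponding to within-subspace rotations lie in the null space of $\bmath{D}_0$, so $\bmath{D}_0$ is singular and its Moore--Penrose inverse $\bmath{D}_0^+$ must be used; the orthonormalization in Step~5 of the algorithm is precisely what projects the increment onto $\mathrm{span}(\bmath{B}_0)^\perp$ and makes this pseudoinverse the correct object. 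Combining the pieces yields $\sqrt n\{l(\hat{\bmath{B}})-l(\bmath{B}_0)\}=-\bmath{D}_0^+\,\frac{1}{\sqrt n}\sum_i\bm\psi_i(\bmath{B}_0)+o_p(1)$, and Slutsky's theorem delivers the stated $N(\bm 0,\bmath{D}_0^+\bmath{\Sigma}_0\bmath{D}_0^+)$. The main obstacle is the uniform control (over $j$ and over $\bmath{B}$ in the $n^{-1/2}$-neighborhood) of the nonparametric local-linear errors and of the kernel-weighted empirical averages, showing that their bias and variance are $o_p(n^{-1/2})$ so they do not contaminate the limit; a secondary difficulty is correctly handling the rank deficiency of $\bmath{D}_0$ induced by the Grassmann identifiability so that the pseudoinverse representation is valid.
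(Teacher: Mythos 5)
Your proposal is correct and follows essentially the same route as the paper's proof: starting from the rate in Theorem \ref{thm1} under the undersmoothing condition, linearizing the fixed-point (stationarity) equation of the alternating weighted-least-squares iteration, showing the profiled local coefficients induce the conditional centering $\nu_{\bmath{B}_0}(\bmath{X})$, invoking the mean-zero structure $E[w_T\mid\bmath{X}]=E[w_T\epsilon\mid\bmath{X}]=0$ for the CLT, and handling the Grassmann-induced rank deficiency of the limiting Jacobian via the Moore--Penrose inverse $\bmath{D}_0^+$. This is the same unified iteration-tracking analysis in the spirit of \citet{Xia2007} that the paper carries out in its supplement.
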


\begin{theorem} \label{thm6}
	Suppose that Conditions (C.1) - (C.6) are satisfied and $g$ is estimated by kernel $K_{h}$ of some order $m$. Then $h$ can be selected such that when $n$ is large enough,
	\begin{equation*}
	\|\hat{ g}(\hat{\bm B}^\top \bm X)-g(\bm B_0^\top\bm X)\|_{\infty}\leq O\Big\{ {(n/\log n)}^{-\frac{m}{2m+d}} \Big \}, \text{almost surely}.
	\end{equation*} where $m$ can be any integer when $d\leq 5$, but  $m \le {4d}/{(d-5)}$ when $d> 5$
\end{theorem}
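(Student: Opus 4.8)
The plan is to recognize $\hat g$ in \eqref{eq:predict_g} as a Nadaraya--Watson estimator of the regression of $w_{T}Y$ on the estimated index $\hat{\bmath{B}}^\top\bmath{X}$, and to exploit the identity $E[w_{T}Y\mid\bmath{B}_0^\top\bmath{X}]=g(\bmath{B}_0^\top\bmath{X})$, which follows from the contrast model \eqref{eq:model_contrast} together with the definition $w_{T}=T/\pi_{T}(\bmath{X})$. Writing $g_{\bmath{B}}(\bmath{u})\equiv E[w_{T}Y\mid\bmath{B}^\top\bmath{X}=\bmath{u}]$, I would decompose, uniformly over $\bmath{x}$ on the compact support,
\[
\hat g(\bmath{x})-g(\bmath{B}_0^\top\bmath{x})
=\big\{\hat g(\bmath{x})-g_{\hat{\bmath{B}}}(\hat{\bmath{B}}^\top\bmath{x})\big\}
+\big\{g_{\hat{\bmath{B}}}(\hat{\bmath{B}}^\top\bmath{x})-g(\bmath{B}_0^\top\bmath{x})\big\},
\]
the first bracket being the stochastic deviation of the kernel estimator about its conditional target (carrying also the smoothing bias), and the second the error incurred by substituting $\hat{\bmath{B}}$ for $\bmath{B}_0$.

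For the stochastic term I would first bound the denominator $n^{-1}\sum_i K_h(\hat{\bmath{B}}^\top(\bmath{X}_i-\bmath{x}))$ away from $0$ using that $f_{\bmath{B}}$ is bounded below (Condition (C.6)) and $K$ has compact support (Condition (C.3)), and then derive a uniform, almost-sure rate $\sqrt{\log n/(nh^d)}$ for the deviations of both numerator and denominator. Since the covariates have compact support (Condition (C.1)) and $w_{T}\epsilon$ is conditionally subgaussian (Condition (C.4)), each fixed-$(\bmath{x},\bmath{B})$ deviation admits a Bernstein/Bennett exponential bound; a covering argument over the support and over a shrinking neighborhood $\{\bmath{B}:|\bmath{B}-\bmath{B}_0|\le\delta\}$ in which $\hat{\bmath{B}}$ eventually lives by Theorem~\ref{thm1}, the Lipschitz smoothness of $K$, and Borel--Cantelli upgrade this to a uniform a.s.\ statement. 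The smoothing bias of $g_{\hat{\bmath{B}}}$ is $O(h^m)$ uniformly, since the order-$m$ kernel annihilates the first $m-1$ moments and $g_{\bmath{B}}$ inherits $m$ bounded derivatives from Conditions (C.1) and (C.6).

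The second bracket is where the dimension restriction originates. By Lipschitz continuity of $g_{\bmath{B}}(\bmath{u})$ in both $\bmath{u}$ and $\bmath{B}$ (Conditions (C.1), (C.6)), this term is controlled by $\delta_{\hat{\bmath{B}}}=|\hat{\bmath{B}}-\bmath{B}_0|$. The key refinement is that only the \emph{systematic} (bias) component of $\hat{\bmath{B}}-\bmath{B}_0$ produces a genuine shift in $\hat g$, while its stochastic component is of the same nature as, and is absorbed by, the stochastic term already bounded above. By Theorem~\ref{thm1}, taking the final iMAVE bandwidth at the fastest rate permitted by Condition (C.5), $\hbar\asymp(n/\log n)^{-1/(d+3)}$, the systematic component is $O(\hbar^{4})=O\big((n/\log n)^{-4/(d+3)}\big)$, so the index-substitution error is of this order.

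Balancing the smoothing bias against the stochastic rate, $h^{m}\asymp\sqrt{\log n/(nh^d)}$, gives $h\asymp(n/\log n)^{-1/(2m+d)}$ and the intrinsic kernel rate $(n/\log n)^{-m/(2m+d)}$. The claimed bound then holds provided the index-substitution order is dominated, i.e.\ $4/(d+3)\ge m/(2m+d)$. For $d\le 5$ the left side is at least $1/2$, so $(n/\log n)^{-4/(d+3)}\le(n/\log n)^{-1/2}$ is negligible against the kernel rate for \emph{every} $m$; for $d>5$ the inequality rearranges to $m\le 4d/(d-5)$. I expect the main obstacle to be precisely this coupling: establishing the uniform-in-$\bmath{x}$ and uniform-in-$\bmath{B}$ almost-sure rate for the ratio estimator with a random plug-in index, and in particular arguing that the stochastic part of $\hat{\bmath{B}}-\bmath{B}_0$ does not inflate the bias, so that only the $\hbar^{4}$ term governs the second bracket. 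This is exactly what yields the exponent $4/(d+3)$, and hence both the threshold $d=5$ and the ceiling $m\le 4d/(d-5)$.
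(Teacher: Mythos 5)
Your overall strategy is the natural (and almost certainly the intended) route to this result, and your bookkeeping is what matters most here: decompose $\hat g-g$ into the kernel-estimation error of the Nadaraya--Watson ratio at the plugged-in index and the index-substitution error, bound the first by the standard uniform almost-sure rate $\sqrt{\log n/(nh^{d})}+h^{m}$ (using (C.1), (C.3), (C.4) for the covering/Bernstein/Borel--Cantelli step and (C.6) for the denominator lower bound), bound the second by $|\hat{\bmath{B}}-\bmath{B}_0|$ via Lipschitz continuity of $E[w_TY\mid \bmath{B}^\top\bmath{X}=\bmath{u}]$, and balance bandwidths. Your inequality $4/(d+3)\ge m/(2m+d)$, i.e.\ $m(d-5)\le 4d$, reproduces exactly the theorem's threshold $d=5$ and the ceiling $m\le 4d/(d-5)$.

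There is, however, one genuine flaw in how you anchor the index-substitution term: you invoke Theorem~\ref{thm1}, which is an in-probability ($O_p$) statement, inside an almost-sure conclusion; an $O_p$ bound cannot be combined with a Borel--Cantelli argument. What you need, and what the paper supplies, is the almost-sure bound of Theorem~\ref{thm2}: $|\hat{\bmath{B}}-\bmath{B}_0|\le C\{\delta_n+\delta_{d\hbar}^{2}\hbar+\hbar^{4}\}$ a.s., with $\delta_n=(n/\log n)^{-1/2}$ and $\delta_{d\hbar}^{2}\hbar=\log n/(n\hbar^{d-1})$. At the fastest bandwidth permitted by (C.5), $\hbar\asymp (n/\log n)^{-1/(d+3)}$, both $\hbar$-terms equal $(n/\log n)^{-4/(d+3)}$, so the a.s.\ rate is $(n/\log n)^{-1/2}+(n/\log n)^{-4/(d+3)}$. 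Once this is in hand, your ``key refinement''---arguing that only the systematic part of $\hat{\bmath{B}}-\bmath{B}_0$ shifts $\hat g$ while its stochastic part is ``absorbed''---is unnecessary, which is fortunate because as stated it is the vaguest step of your outline: since $m/(2m+d)<1/2$ for every $m\ge 1$ and $d\ge 1$, the $(n/\log n)^{-1/2}$ component is automatically dominated by the kernel rate, and a plain triangle inequality with the full a.s.\ rate of Theorem~\ref{thm2} gives the claim, the binding constraint coming solely from $(n/\log n)^{-4/(d+3)}\lesssim (n/\log n)^{-m/(2m+d)}$. One further caveat: the $O(h^{m})$ bias of an order-$m$ kernel requires $m$ bounded derivatives of $f_{\bmath{B}}$ and of $g_{\bmath{B}}f_{\bmath{B}}$; Conditions (C.1) and (C.6) only provide low-order derivatives and Lipschitz continuity, so this extra smoothness should be stated as an assumption accompanying the choice of $m$ rather than claimed to be ``inherited'' from the stated conditions.
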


\begin{theorem}\label{thm:efficient}
	Denote $\delta_{ph}\equiv (nh^p/\log n)^{-1/2}$.  In iMAVE2, suppose $d\leq 5$ and $\delta_{ph}^2+h^{2m}=o(n^{-1/2})$ when estimating $\eta^*$ by $\hat{\eta}^*$ using \eqref{eq:estimation_mainEffect2} or \eqref{eq:estimation_mainEffect}. Then, under Conditions (C.1) - (C.5), for iMAVE2, Theorems \ref{thm2} and \ref{thm1} still hold and Theorem \ref{thm3} holds with the asymptotic variance, $\bmath{D}_0^+\bmath{\Sigma}_0^*\bmath{D}_0^+$, where $\bmath{\Sigma}_0^*=Var\Big[\pi_{T_i}(\bmath{X}_i)^{-1} {T_i}\nabla g(\bmath{B}_0^\top  \bmath{X}_i)\otimes \nu_{\bmath{B}_0}(\bmath{X}_i)\{\epsilon_i-\eta^*(\bmath{X}_i)\}\Big]$, and $\bmath{\Sigma}_0-\bmath{\Sigma}_0^*$ is positive semi-definite.
\end{theorem}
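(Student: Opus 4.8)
The plan is to establish the three assertions in turn: that the convergence results of Theorems \ref{thm2} and \ref{thm1} survive replacing the fixed $\eta$ by the plug-in $\hat\eta^*$; that the asymptotic normality of Theorem \ref{thm3} carries over with $\eta$ replaced by $\eta^*$; and finally that $\bm\Sigma_0-\bm\Sigma_0^*$ is positive semidefinite. For the first assertion, I would note that iMAVE2 runs exactly the iteration of iMAVE with each $\eta(\bm X_i)$ replaced by $\hat\eta^*(\bm X_i)$. Since the estimating equations are unbiased for \emph{any} function of $\bm X$, and $\hat\eta^*$ is a (data-dependent) function of $\bm X$, the entire iteration analysis behind Theorems \ref{thm2} and \ref{thm1} reproduces after treating $\hat\eta^*-\eta^*$ as an additional perturbation of the weighted least squares. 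The stated rates $\|\hat g-g\|_\infty=O(\delta_{dh_1})$ and $\delta_{ph}^2+h^{2m}=o(n^{-1/2})$ guarantee that this perturbation is of smaller order than the bounds already appearing in those theorems, so the conclusions are unchanged.

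For the asymptotic distribution, the score of iMAVE2 differs from that of iMAVE only through the factor $\{\epsilon_i-\hat\eta^*(\bm X_i)\}$ in place of $\{\epsilon_i-\eta(\bm X_i)\}$, while the Jacobian $\bm D_0$ is unaffected because $\eta$ does not depend on $\bm B$ (hence the same $\bm D_0^+$ appears in the sandwich). Writing $\hat\eta^*=\eta^*+(\hat\eta^*-\eta^*)$, the leading term reproduces Theorem \ref{thm3} verbatim with $\eta^*$ in place of $\eta$, yielding $\bm\Sigma_0^*$. The crux is to show that
\[
\tfrac1n\textstyle\sum_i w_{T_i}\,\nabla g(\bm B_0^\top\bm X_i)\otimes\nu_{\bm B_0}(\bm X_i)\{\hat\eta^*(\bm X_i)-\eta^*(\bm X_i)\}=o_p(n^{-1/2}).
\]
Here I would exploit the orthogonality $E[w_{T_i}\mid\bm X_i]=0$, which annihilates the first-order effect of the nuisance error (a Neyman-orthogonality phenomenon). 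Because $\hat\eta^*$ is itself built on the same data through the estimated residuals $\hat\epsilon_j=Y_j-\tfrac12 T_j\hat g(\hat{\bm B}^\top\bm X_j)$, I would first replace $(\hat g,\hat{\bm B})$ by $(g,\bm B_0)$ at a cost controlled by $\|\hat g-g\|_\infty=O(\delta_{dh_1})$ and Theorem \ref{thm1}, and then treat the resulting double sum as a U-statistic, bounding its degenerate and non-degenerate parts using the kernel bias/variance rates collected in $\delta_{ph}^2+h^{2m}=o(n^{-1/2})$.

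The efficiency comparison is the clean algebraic part. Since the integrands of both $\bm\Sigma_0$ and $\bm\Sigma_0^*$ have mean zero (by Condition \eqref{eq:condition_error} together with $E[w_T\mid\bm X]=0$), each variance equals a second moment, and with $\bm c(\bm X)\equiv\nabla g(\bm B_0^\top\bm X)\otimes\nu_{\bm B_0}(\bm X)$ one has $\bm\Sigma_0-\bm\Sigma_0^*=E\big[w_T^2\{(\epsilon-\eta)^2-(\epsilon-\eta^*)^2\}\,\bm c\bm c^\top\big]$. Expanding $(\epsilon-\eta)^2-(\epsilon-\eta^*)^2=2(\epsilon-\eta^*)(\eta^*-\eta)+(\eta^*-\eta)^2$, the cross term vanishes: Condition \eqref{eq:condition_error} forces $E[\epsilon\mid T,\bm X]=\eta^*(\bm X)$ to be the same for every level of $T$, so $E[w_T^2(\epsilon-\eta^*)\mid\bm X]=0$, and integrating this against the $\bm X$-measurable factor $\bm c\bm c^\top(\eta^*-\eta)$ gives zero. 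What remains, $E[w_T^2(\eta^*-\eta)^2\,\bm c\bm c^\top]$, is an average of positive semidefinite matrices and hence positive semidefinite, which proves $\bm\Sigma_0\succeq\bm\Sigma_0^*$.

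I expect the main obstacle to be the $o_p(n^{-1/2})$ negligibility in the second step, precisely because $\hat\eta^*$ is a nested plug-in estimator formed from $\hat g$ and $\hat{\bm B}$ on the same sample, so the error propagation must be tracked through a U-statistic decomposition rather than a simple conditioning argument; the structural orthogonality $E[w_T\mid\bm X]=0$ is what makes the leading nuisance contribution drop out and reduces the remaining work to controlling it by the stated rates.
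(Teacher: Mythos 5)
Your overall route is the one the theorem's own hypotheses are designed for, and it matches the paper's strategy: treat iMAVE2 as iMAVE run with a perturbed $\eta$, so Theorems \ref{thm2} and \ref{thm1} survive; expand the score with $\hat\eta^*=\eta^*+(\hat\eta^*-\eta^*)$ so Theorem \ref{thm3} holds with $\eta^*$ (same $\bm D_0^+$, since $\hat\eta^*$ does not vary with the argument $\bm B$); and prove positive semi-definiteness by a variance decomposition. Your efficiency step is complete and correct: both integrands have mean zero by \eqref{eq:condition_error} together with $E[w_T\mid\bm X]=0$, the cross term vanishes because \eqref{eq:condition_error} with binary $T$ forces $E[\epsilon\mid T,\bm X]=E[\epsilon\mid\bm X]=\eta^*(\bm X)$ and hence $E[w_T^2(\epsilon-\eta^*)\mid\bm X]=0$, and what remains is an expectation of positive semi-definite matrices. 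You have also correctly located the crux (the $o_p(n^{-1/2})$ negligibility of the plug-in error) and the key structural fact ($E[w_T\mid\bm X]=0$).

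There is, however, a quantitative gap in how you propose to handle the nested part of $\hat\eta^*$. You would ``replace $(\hat g,\hat{\bm B})$ by $(g,\bm B_0)$ at a cost controlled by $\|\hat g-g\|_\infty=O(\delta_{dh_1})$ and Theorem \ref{thm1},'' reserving the U-statistic analysis for the double sum that remains after replacement. If that replacement cost is accounted for by a sup-norm bound, the corresponding term is only $O_p(\delta_{dh_1})$, and $\delta_{dh_1}$ is never $o(n^{-1/2})$: under (C.5), $\delta_{dh_1}^2=\log n/(nh_1^d)$ is of order $n^{dr_h-1}\log n$ with $r_h>0$, so $\delta_{dh_1}\gg n^{-1/2}$. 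Concretely, writing $B_i$ for the kernel-weighted average over $j$ of $\hat\epsilon_j-\epsilon_j=\tfrac12T_j\{g(\bm B_0^\top\bm X_j)-\hat g(\hat{\bm B}^\top\bm X_j)\}$, the crude bound
\begin{equation*}
\Big|\tfrac1n\sum_i w_{T_i}\{\nabla g(\bm B_0^\top\bm X_i)\otimes\nu_{\bm B_0}(\bm X_i)\}B_i\Big|\le O_p(1)\cdot\|\hat g-g\|_\infty
\end{equation*}
does not close the argument. The factor $w_{T_i}$ must be exploited for this term as well: for a \emph{fixed} pair $(g',\bm B')$ in place of $(\hat g,\hat{\bm B})$, the sum has conditional mean zero given the $\bm X$'s and standard deviation $O(n^{-1/2}\|g-g'\|_\infty)=O(n^{-1/2}\delta_{dh_1})=o(n^{-1/2})$; one then needs uniformity over a class of pairs containing $(\hat g,\hat{\bm B})$ with high probability (an entropy or iteration-tracking argument), plus a degenerate U-statistic bound for the deviation of the kernel average from its conditional mean, noting also that $B_i$ involves the $T_j$'s and not just the $\bm X_j$'s. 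This is the same orthogonality mechanism you already invoke for the $\epsilon_j$-part, so the repair is bookkeeping rather than a new idea; but as literally written the replacement step would fail, and it is precisely why the hypothesis $\|\hat g-g\|_\infty=O(\delta_{dh_1})$ can only enter the conclusion through $n^{-1/2}\delta_{dh_1}$ or $\delta_{dh_1}^2$, both $o(n^{-1/2})$ when $d\le 5$.
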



Detailed proofs for all theorems are given in the Supplemental Materials.  Here we consider construction of confidence intervals for $\mathbf{B}_0$ and possible improvement of empirical estimation with limited sample sizes. From Theorems~\ref{thm3} and~\ref{thm:efficient}, we know that the estimators are both $\sqrt{n}$-consistent and asymptotically normal under suitable conditions. This makes the inference of $\mathbf{B}_0$ possible if we have a stable way to estimate the asymptotic variances to form confidence intervals. In theory we just need to evaluate the variance formulae using observed data.

However, we found from our simulation studies that estimation of $\nabla g$ in the asymptotic variance formulae can be challenging. If we directly use all the data to estimate $\nabla g$, the resulting confidence intervals often over cover. This is because estimation of $\nabla g$ is directly related to estimation of $\mathbf{B}_0$. Using data twice to first estimate $\mathbf{B}_0$ and then estimate $\nabla g$  leads to overfitting. Therefore, we propose a sample split procedure to alleviate this issue, similar to some recent works \cite{Chernozhukov2018, Yingqi2019, athey2017efficient}.  Specifically, the whole data set is split into halves randomly. On the first half, an iMAVE or iMAVE2 estimate of $\mathbf{B}_0$ is obtained. On the other half, we estimate $g$ and $\nabla g$ using smoothing splines.

In addition, we found that a further one-step Newton-Raphson estimator for $\mathbf{B}_0$ can lead to some improvement, especially when the sample size is limited. In particular we use the following step:
\begin{equation*}
\hat{\mathbf{B}}_{NR}=\hat{\mathbf{B}}_{MV}-\left\{E^{(1)}\left[\frac{\partial \hat{\mathcal{S}}(\hat{\mathbf{B}}_{MV}; \bmath{X}, T, Y)}{\partial \hat{\mathbf{B}}_{MV}} \right]\right\}^{-1}E^{(1)}\left[\hat{\mathcal{S}}(\hat{\mathbf{B}}_{MV}; \bmath{X}, T, Y)\right],
\end{equation*}
where
\begin{equation*}
\hat{\mathcal{S}}(\hat{\mathbf{B}}_{MV}; \bmath{X}, T, Y) = \pi_{T}(\bmath{X})^{-1} {T}\nabla \hat{g}(\hat{\mathbf{B}}_{MV}^\top \bmath{X})\otimes \hat{\nu}_{\hat{\mathbf{B}}_{MV}}(\bmath{X}_i) (\epsilon-\eta(\bmath{X})),
\end{equation*}
$\hat{\mathbf{B}}_{MV}$ is the iMAVE or iMAVE2 estimator with corresponding choice of $\eta$ or $\hat{\eta}^*(\mathbf{X})$ on the first half of the dataset, $\hat{\nu}_{\hat{\mathbf{B}}_{MV}}$ is the estimator of $\nu_{\bmath{B}_0}$ on the second half of the dataset, and $\nabla\hat{g}$ is the estimator of gradient on the second half of the dataset. $E^{(1)}[\cdot]$ represents expectation taken over the first half of the dataset. From the theory of one-step Newton-Raphson estimators, $\hat{\mathbf{B}}_{NR}$ is still a $\sqrt{n}$-consistent estimator and its asymptotic variance can be estimated by
\begin{equation*}
\left\{E^{(1)}\left[\frac{\partial \hat{\mathcal{S}}(\hat{\mathbf{B}}_{MV}; \bmath{X}, T, Y)}{\partial \hat{\mathbf{B}}_{MV}} \right]\right\}^{-1}Var\left[\hat{\mathcal{S}}(\hat{\mathbf{B}}_{MV}; \bmath{X}, T, Y)\right]\left\{E^{(1)}\left[\frac{\partial \hat{\mathcal{S}}(\hat{\mathbf{B}}_{MV}; \bmath{X}, T, Y)}{\partial \hat{\mathbf{B}}_{MV}} \right]\right\}^{-1}.
\end{equation*}
Due to the sample split procedure, the estimation error of $\hat{g}$ is not related to the first half of the data, which results in a more stable estimation of the asymptotic variance.

\section{Simulation}
\label{s:simulation}

Here our method is evaluated and compared with existing methods. In particular, we compare with the outcome weighted learning method based on a logistic loss in \citet{Xu2015}, the modified covariate method under the squared loss proposed in \citet{Tian2014}, and residual weighted learning method \citep{Zhou2017} based on a logistic loss. We also compare with Q-learning with linear basis functions as a parametric version of the proposed loss function \citep{qian2011}. We first evaluate estimation results assuming $d$ is known and then investigate dimension determination. {Given the fact that  \citet{song2017} is a special case of iMAVE and their method can be applied only when $d=1$, we do not include it as our comparison method.}

We report part of the results for estimating effect modification and dimension determination in the main text. The rest of the simulation results are relegated to the Supplemental Materials.   There we also report confidence interval coverage, results for additional settings including  more complex data generation models and correlated covariates. 

\subsection{Estimation evaluation with known $d$}
\label{subsect:simulation_estimation}

Data are generated by the following model,
\begin{eqnarray}
y=({\bm \beta}^\top  \bm X)^2+\frac{1}{2}Tg({\bm \beta}^\top  \bm X)+\epsilon,
\label{simulation model}
\end{eqnarray}
where $\epsilon\sim N(0,\sigma^2)$ and $g$ is chosen as
\begin{enumerate}
	\item Linear: $g({\bm {\bm \beta}}^\top  \bm X)=\tau{\bm \beta}^\top  \bm X$;
	\item Logistic: $g({\bm \beta}^\top  \bm X)=\tau\{ (1+e^{-{\bm \beta}^\top  \bm X})^{-1}-0.5\}$;
	\item Gaussian: $g({\bm \beta}^\top  \bm X)=\tau\{\Phi({\bm \beta}^\top  \bm X)-0.5\}$, where $\Phi(\cdot)$ is the Gaussian distribution function.
\end{enumerate}

We set $\sigma=0.6$, $\tau=7$, and $T$ is generated to be $-1$ or $1$ with equal probability and independent with all other variables. The true ${\bm \beta}_0$ is chosen to be $(1,1,1,1)^\top $. $\bm X$ is generated from $N(0,\bm I_{4\times4})$. The sample size $n$ varies from $200$, $500$ to $1000$. Results are summarized from 1000 simulated data sets.

Table~\ref{tab:coef_low} investigates the asymptotic bias of the iMAVE and iMAVE2 and the possible gain in efficiency from the latter. The ratios $\hat{\beta}_j/\hat{\beta}_1$, $j=2,3,4$, are reported due to the Grassmann manifold assumption for identifiability. Whereas there are some empirical biases for nonlinear $g$ under small sample sizes, as the sample size increases, the means of the ratios all approach $1$, the true value. There is noticeable improvement from iMAVE2 over iMAVE in terms of MSE. 

We further consider prediction results under the settings of known and estimated propensity scores. In particular we investigate the estimated effect modification in terms of correct classification rate and rank correlation over test data sets generated independently according to the true simulation model above but with sample sizes of $10000$. The rank correlation is determined by 
the fitted classifier and the true $g({\bm \beta}_0^\top  \bm X)$ and the classification rate by their corresponding signs.  For example, for iMAVE and iMAVE2,  we evaluate the rank correlation between $\hat{g}(\hat{\bm \beta}^\top\! \bm X)$ and $g({\bm \beta}_0^\top  \bm X)$ and the concordance between $\hat{g}(\hat{\bm \beta}^\top\! \bm X)>0$ and $g({\bm \beta}_0^\top  \bm X)>0$ to determine the correct classification rate.

{In our simulation setting where $g$ is monotone and $g(0)=0$, the sign of $g({\bm \beta}_0^\top  \bm X)$ is also identical to that of ${\bm \beta}_0^\top  \bm X$. In addition, the rank correaltion between $g(\hat{\bm \beta}^\top\! \bm X)$ and $g({\bm \beta}_0^\top  \bm X)$ is also identical to that between $\hat{\bm \beta}^\top\! \bm X$ and ${\bm \beta}_0^\top  \bm X$. Because the resulting estimators of \cite{Tian2014},  \cite{Xu2015}, and \cite{Zhou2017}  are parametric and target at the decision boundary ${\bm \beta}_0^\top  \bm X$, we also include results of iMAVE(index) and iMAVE2(index) which compare the concordance between $\hat{\bm \beta}^\top\! \bm X>0$ and ${\bm \beta}_0^\top  \bm X>0$ and the rank correlation between $\hat{\bm \beta}^\top\! \bm X$ and ${\bm \beta}_0^\top  \bm X$ when $g$ is monotone and $g(0)=0$. This represents a more fair comparison with the parametric methods. Again, the index comparison only makes sense when $g$ is monotone which is the case in our simulation setting. }

From Figure~\ref{fig:plot1}, our methods have the best correct classification rates for the test datasets in all settings with known propensity score. 
When $g$ is monotone and $g(0)=0$, in terms of rank correlation, iMAVE2(index) is the best followed by iMAVE(index). The performances of iMAVE and iMAVE2 sacrifice slightly due to the estimation of $g$. 

{We further investigate the setting when $\pi_T({\bm X})$ needs to be estimated. In this case, we generate $T$ from a logistic model with coefficients $\tilde{\bm \beta}=(0.2,-0.2,0.2,-0.2)^\top$ and then fit a logistic regression  for $\pi_T({\bm X})$. After estimating $\pi_T({\bm X})$, all methods are implemented with the estimated $\pi_T({\bm X})$.}
From Figure~\ref{fig:plot2}, our methods have the best correct classification rate and rank correlation than all other methods in all settings.

\begin{table}
	\caption{Simulation results for coefficient estimation.}
	\label{tab:coef_low}
	\begin{center}
		\begin{tabular}{ccccccccc}
			\toprule
			\multirow{2}{*}{Size}& & & \multicolumn{2}{c}{Linear} & \multicolumn{2}{c}{Gaussian} & \multicolumn{2}{c}{Logistic}  \\[-0.5ex]
			\cmidrule(lr){4-5} \cmidrule(lr){6-7} \cmidrule(lr){8-9}  
			& & & iMAVE &  iMAVE2 & iMAVE &  iMAVE2 & iMAVE &  iMAVE2\\[-.25ex]
			\hline\\[-2.5ex]
			\multirow{6}{*}{200}& \multirow{3}{*}{mean} & $\hat{\beta}_2/\hat{\beta}_1$ & 0.9995 & 0.9986 & 0.8630 & 0.9161 & 0.7797 & 0.8611 \\
			& & $\hat{\beta}_3/\hat{\beta}_1$ & 1.0021 & 1.0021 & 0.8960&0.9410 & 0.8192 & 0.8884 \\
			& & $\hat{\beta}_4/\hat{\beta}_1$ & 1.0042 & 1.0035 & 0.8891&0.9408 & 0.8013 & 0.8802 \\[1ex]
			\cline{3-9} \\[-2ex]
			& \multirow{3}{*}{$\sqrt{mse}$} & $\hat{\beta}_2/\hat{\beta}_1$ & 0.0563 & 0.0378 & 0.3122&0.2044  & 0.4106 & 0.2890 \\
			& & $\hat{\beta}_3/\hat{\beta}_1$ & 0.0586 & 0.0386 &0.2971&0.1977 & 0.4056 & 0.2837\\
			& & $\hat{\beta}_4/\hat{\beta}_1$ & 0.0540 & 0.0361 & 0.3075&0.2055 & 0.4191 &0.2847\\[0.25ex]
			\hline\\[-2.5ex]
			\multirow{6}{*}{500}& \multirow{3}{*}{mean} & $\hat{\beta}_2/\hat{\beta}_1$ & 0.9978 &0.9994 & 0.9526&0.9759& 0.8995&0.9484\\
			& & $\hat{\beta}_3/\hat{\beta}_1$ & 1.0010 & 1.0004 & 0.9701&0.9854 & 0.9193& 0.9625 \\
			& & $\hat{\beta}_4/\hat{\beta}_1$ & 1.0020 & 1.0004 & 0.9452 & 0.9798 & 0.8994& 0.9477\\[1ex]
			\cline{3-9} \\[-2ex]
			& \multirow{3}{*}{$\sqrt{mse}$} & $\hat{\beta}_2/\hat{\beta}_1$ & 0.0372 & 0.0207 & 0.1676&0.0975 & 0.2539& 0.1558 \\
			& & $\hat{\beta}_3/\hat{\beta}_1$ & 0.0329 & 0.0188 & 0.1663&0.0935& 0.2587 & 0.1507 \\
			& & $\hat{\beta}_4/\hat{\beta}_1$ & 0.0326 & 0.0184& 0.1675&0.0925& 0.2531 & 0.1505 \\[0.5ex]
			\hline\\[-2.5ex]
			\multirow{6}{*}{1000}& \multirow{3}{*}{mean} & $\hat{\beta}_2/\hat{\beta}_1$ & 1.0015 & 1.0006 & 0.9994&1.0032 & 0.9728&0.9913 \\
			& & $\hat{\beta}_3/\hat{\beta}_1$ & 1.0009 & 1.0007 & 1.0020&1.0026 & 0.9794&0.9946\\
			& & $\hat{\beta}_4/\hat{\beta}_1$ & 0.9993 & 1.0006 & 0.9980& 1.0018 & 0.9756 & 0.9897 \\[1ex]
			\cline{3-9} \\[-2ex]
			& \multirow{3}{*}{$\sqrt{mse}$} & $\hat{\beta}_2/\hat{\beta}_1$ & 0.0233 & 0.0124 & 0.1014&0.0515 & 0.1656&0.0905 \\
			& & $\hat{\beta}_3/\hat{\beta}_1$ & 0.0247 & 0.0125 & 0.1017&0.0533& 0.1672&0.0894 \\
			& & $\hat{\beta}_4/\hat{\beta}_1$ & 0.0236 & 0.0123 & 0.1033&0.0520 & 0.1627 & 0.0885 \\
			\bottomrule
		\end{tabular}
	\end{center}
\end{table}

\begin{figure}
	\vspace{6pc}
	\includegraphics[scale=0.65]{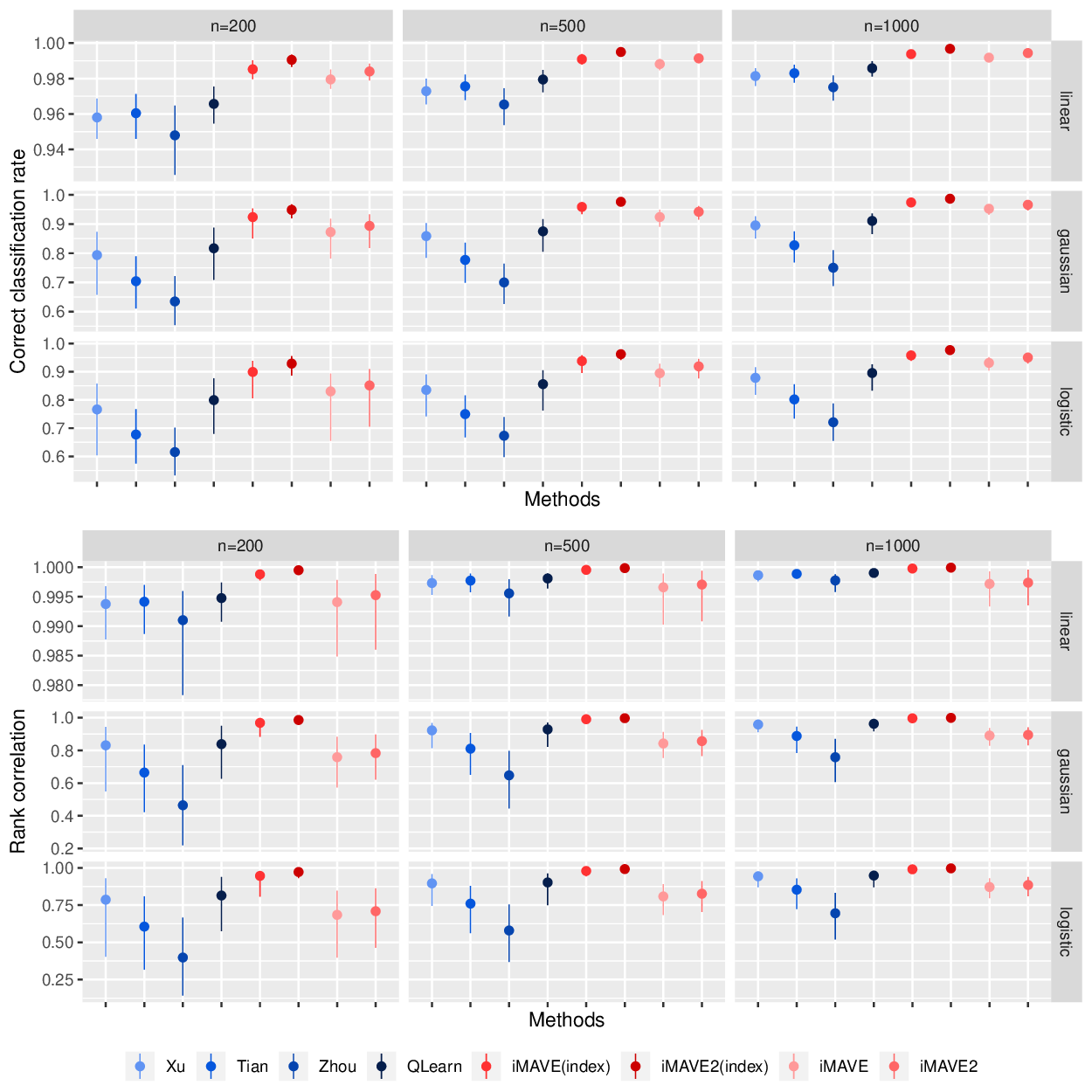}
	\caption{Simulation results for rank correlation and classification rate with {\bf known}  $\pi_T({\bm X})$. The point represents the median, and the vertical line represents the range from the $0.25$ to the $0.75$ quantiles, of the results from 1000 simulations.
		\label{fig:plot1}}
\end{figure}

\begin{figure}
	\vspace{6pc}
	\includegraphics[scale=0.65]{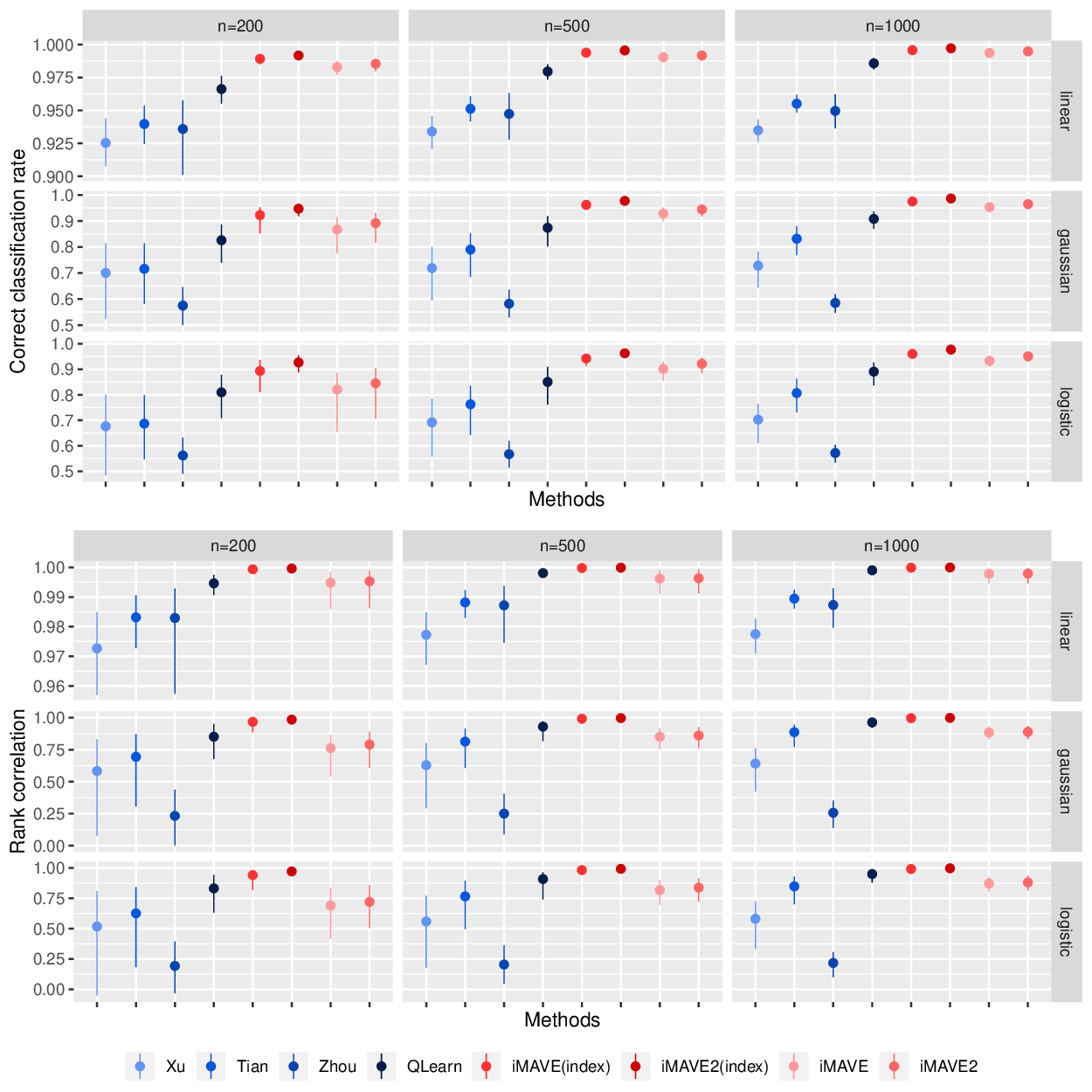}
	\caption{Simulation results for rank correlation and classification rate with {\bf estimated}  $\pi_T({\bm X})$. The point represents the median, and the vertical line represents the range from the $0.25$ to the $0.75$ quantiles, of the results from 1000 simulations.
		\label{fig:plot2}}
\end{figure}

\subsection{Dimension determination}

Here we evaluate our dimension determination procedure through simulation. We follow  Section \ref{subsect:simulation_estimation} mostly except that we set $p=10$ and the true $d=2$. Consequently, the function $g$ is 
\begin{equation*}
g(\bm B^\top \bm X)=\tau\{\Phi(\bm \beta_1^\top  \bm X)-0.5\}+\tau\{\Phi(\bm \beta_2^\top  \bm X)-0.5\}
\end{equation*}
where $\bm \beta_1=(1,1,1,1,1,1,1,1,1,1)^\top$ and $\bm \beta_2=(1,-1,1,-1,1,-1,1,-1,1,-1)^\top.$ We set $\gamma=0.1$ and the sample size $n$ is fixed at $500$. Over $100$ simulated data sets, our procedure was able to choose the correct dimension $2$ for $72$ times, $3$ for $26$ times, and $4$ for $2$ time. As we mentioned before, over-estimating the dimension slightly is not a big issue. There is no under-estimation of $d$, but slight over-estimation in some data sets.

\section{Application to a mammography screening study}

This is a randomized study that included female subjects who were non-adherent to mammography screening guidelines at baseline (i.e., no mammogram in the year prior to baseline) \citep{champion2007effect}. One primary interest of the study was to compare the intervention effect of \textcolor{black}{phone counseling on mammography screening (phone intervention)} versus usual care at 21 months post-baseline. The outcome is whether a subject took mammography screening during this time period. There are 530 subjects with 259 in the phone intervention group and 271 in the usual care group. Baseline covariates include socio-demographics, health belief variables, stage of readiness to undertake mammography screening, and number of years had a mammogram in past 2 to 5 years in the study.  In total, there are $211$ covariates including second order interactions among the covariates. 

Our methods, together with our comparator methods \cite{Xu2015,Tian2014, Zhou2017}, were applied to this data set. To compare the results of the estimated treatment assignment rules, we used the following metrics.  An assignment rule $T({\bm X})$ refers to a mapping from ${\bm X}$ to $\{1, -1\}$. For example, in our model set up with $\Delta({\bm X}) = E[Y|T=1, {\bm X}] - E[Y|T=-1, {\bm X}] =g({\bm \beta}^\top{\bm X}),$ the assignment rule that maximizes the expected value of the outcome is $T({\bm X})=1\{g({\bm \beta}^\top{\bm X})>0\}$. For a fitted assignment rule, say $\hat{T}(\bm X)$, the following two quantities are used to evaluate the performances.
\begin{equation*}
E[\Delta_1]=E[Y|\hat{T}(\bm X) = 1, T = 1]-E[Y|\hat{T}(\bm X) = 1, T = -1],
\end{equation*}
and,
\begin{equation*}
E[\Delta_{-1}]=E[Y|\hat{T}(\bm X) = -1, T = -1]-E[Y|\hat{T}(\bm X) = -1, T = 1].
\end{equation*}
They represent gains in the outcome expectations between the recommendation agreeing and disagreeing subgroups. If both $E[\Delta_{-1}]$ and $E[\Delta_1]$ are positive, then the estimated treatment decision rule can improve the outcome. 

The actual evaluation was based on cross-validation. First, $80\%$ of subjects were randomly selected into a training set and the rest into a testing set. 
Apparently, due to this further reduction of sample size, we had to reduce the number of covariates for fitting. We performed screening procedures for all methods in a uniform fashion. In particular, the method of  \cite{Tian2014} with lasso penalty was fitted on the training sets for variable selection. After variable selection, the selected covariates were fitted by each method. For iMAVE and iMAVE2, dimension selection from $d=1,2,3$ was also implemented. Then, the benefit quantities defined above were calculated on the testing set. The cross-validation was based on 100 splits. The SDs in Table \ref{tab:compare_realdata} refer to the standard deviations of 
$\hat{E}[\Delta_1]$ and $\hat{E}[\Delta_{-1}]$ from these 100 repeats. In Table \ref{tab:compare_realdata}, our methods seem to have advantages as they lead to larger $\hat{E}[\Delta_1]$ and $\hat{E}[\Delta_{-1}]$. The average percentages of subjects assigned to $T=1$ and $-1$ in the test sets are also given in the table. {A list of the top selected variables by the screening method is provided in the Supplementary Materials.}

\begin{table}
	\caption{Results for the mammography screening study from 100 cross validations. }
	\label{tab:compare_realdata}
	\begin{center}
		\begin{tabular}{ccccc}
			\toprule
			& \multicolumn{2}{c}{$\hat{E}[\Delta_1]$} & \multicolumn{2}{c}{$\hat{E}[\Delta_{-1}]$}\\
			\cmidrule(lr){2-3} \cmidrule(lr){4-5} \\[-1.75ex]
			&  & Avg \% of &  & Avg \% of  \\
			Method & Mean (SD)&  subj in $T=1$ & Mean (SD)& subj to $T=-1$ \\
			\hline  \\[-1.5ex]
			iMAVE & 0.032(0.014) & 42\% & 0.052(0.012) & 58\% \\
			iMAVE2 & 0.036(0.014) & 42\% & 0.054(0.012) & 58\% \\
			Tian & 0.022(0.013) & 44\% & 0.043(0.011) & 56\%\\
			Xu & 0.026(0.012) & 43\% & 0.044(0.012) & 57\%\\
			Zhou& 0.020(0.013) &41\% &0.041(0.011) &59\% \\
			QLearn & 0.018(0.012)& 33\%&0.022(0.011)&67\%\\
			\bottomrule
		\end{tabular}
	\end{center}
\end{table}

%

\section{Discussion}

In this article, we have proposed a very general semiparametric modeling framework for effect modification estimation. Whereas our main motivational setting is from precision medicine, the framework is generally applicable to statistical interaction discovery with interested variables in many other settings. For example in health disparities research, a complex and interrelated set of individual, provider, health system, societal, and environmental factors contribute to disparities in health and health care. Federal efforts to reduce disparities often include a focus on designated priority populations who are particularly vulnerable to health and health care disparities. Our approach seems ideal for data analysis in this setting. 

When there are many covariates, we have focused on dimension reduction. {In high dimensional settings, variable screening may be needed to reduce the number of covariates. Various methods can be applied in our framework. For example, because $E[TY/\pi_T|\bm X]=g(\bm\beta^\top\bm X)$, we can implement a non-parametric variable screening method such as the distance correlation based approach \cite{runze2012}. Alternatively, regression with penalty for variable selection such as lasso can be used \cite{Tian2014, Xu2015}. } Ideally, one could also incorporate variable selection into our framework when the dimension $d$ is fixed. In particular, lasso type of regularization can be used together with our estimating equations. This can be a fruitful path for future work as variable selection is an important practical issue.

\newpage
\bigskip
\begin{center}
{\large\bf Supplemental Materials}
\end{center}

\begin{description}

\item[] \hspace{.65cm} Estimation with multiple level treatments or exposures, proofs of Theorems \ref{thm:nuissance_multi}-\ref{thm:efficient}, additional simulation results, and supplemental results for the mammography screening study are contained in the Supplemental Materials\label{suppA}. 

\end{description}

\bibliography{ref.bib}

@article{vansteelandt2014,
author = "Vansteelandt, Stijn and Joffe, Marshall",
doi = "10.1214/14-STS493",
journal = "Statistical Science",
month = "11",
number = "4",
pages = "707--731",
publisher = "The Institute of Mathematical Statistics",
title = "Structural Nested Models and G-estimation: The Partially Realized Promise",
volume = "29",
year = "2014"
}

@article{Robins1992b,
abstract = {In order to estimate the causal effects of one or more exposures or treatments on an outcome of interest, one has to account for the effect of "confounding factors" which both covary with the exposures or treatments and are independent predictors of the outcome. In this paper we present regression methods which, in contrast to standard methods, adjust for the confounding effect of multiple continuous or discrete covariates by modelling the conditional expectation of the exposures or treatments given the confounders. In the special case of a univariate dichotomous exposure or treatment, this conditional expectation is identical to what Rosenbaum and Rubin have called the propensity score. They have also proposed methods to estimate causal effects by modelling the propensity score. Our methods generalize those of Rosenbaum and Rubin in several ways. First, our approach straightforwardly allows for multivariate exposures or treatments, each of which may be continuous, ordinal, or discrete. Second, even in the case of a single dichotomous exposure, our approach does not require subclassification or matching on the propensity score so that the potential for "residual confounding," i.e., bias, due to incomplete matching is avoided. Third, our approach allows a rather general formalization of the idea that it is better to use the "estimated propensity score" than the true propensity score even when the true score is known. The additional power of our approach derives from the fact that we assume the causal effects of the exposures or treatments can be described by the parametric component of a semiparametric regression model. To illustrate our methods, we reanalyze the effect of current cigarette smoking on the level of forced expiratory volume in one second in a cohort of 2,713 adult white males. We compare the results with those obtained using standard methods.},
author = {Robins, James M. and Mark, Steven D. and Newey, Whitney K.},
doi = {10.2307/2532304},
issn = {0006341X},
journal = {Biometrics},
title = {{Estimating Exposure Effects by Modelling the Expectation of Exposure Conditional on Confounders}},
year = {1992},
volume=48, 
pages={479-495}
}

@article{Robins1994b,
abstract = {In a randomized trial designed to study the effect of a treatment of interest on the evolution of the mean of a time-dependent outcome variable. subjects are assigned to a treatment regime, or. equivalently, a treatment protocol. Unfortunately, subjects often feil to comply with their assigned regime. From a public health point of view, the causal parameter of interest will often be a function of the treatment differences that would have been observed had, contrary to fact, all subjects remained on protocol. This paper considers the identification and estimation of these treatment differences based on a new class of structural models, the multivariate structural nested mean models, when reliable estimates of each subject's actual treatment are available. Estimates of “actual treatment” might, for example, be obtained by measuring the amount of “active drug” in the subject's blood or urine at each follow-up visit or by pill counting techniques. In addition, we discuss a natural extension of our methods to observational studies. {\textcopyright} 1994, Taylor {\&} Francis Group, LLC. All rights reserved.},
author = {Robins, James M.},
doi = {10.1080/03610929408831394},
issn = {1532415X},
journal = {Communications in Statistics - Theory and Methods},
keywords = {Causal inference,Longitudinal data,Observational studies,Panel data,Semi parametric models,Semiparametric efficiency,Sequential moment restrictions,Time-dependent covanates},
title = {{Correcting for non-compliance in randomized trials using structural nested mean models}},
year = {1994},
volume=23, 
pages={2379-2412}
}

@article{Kunzel2019,
	author = {K{\"u}nzel, S{\"o}ren R. and Sekhon, Jasjeet S. and Bickel, Peter J. and Yu, Bin},
	title = {Metalearners for estimating heterogeneous treatment effects using machine learning},
	volume = {116},
	number = {10},
	pages = {4156--4165},
	year = {2019},
	doi = {10.1073/pnas.1804597116},
	publisher = {National Academy of Sciences},
	abstract = {Estimating and analyzing heterogeneous treatment effects is timely, yet challenging. We introduce a unifying framework for many conditional average treatment effect estimators, and we propose a metalearner, the X-learner, which can adapt to structural properties, such as the smoothness and sparsity of the underlying treatment effect. We present its favorable properties, using theory and simulations. We apply it, using random forests, to two field experiments in political science, where it is shown to be easy to use and to produce results that are interpretable.There is growing interest in estimating and analyzing heterogeneous treatment effects in experimental and observational studies. We describe a number of metaalgorithms that can take advantage of any supervised learning or regression method in machine learning and statistics to estimate the conditional average treatment effect (CATE) function. Metaalgorithms build on base algorithms{\textemdash}such as random forests (RFs), Bayesian additive regression trees (BARTs), or neural networks{\textemdash}to estimate the CATE, a function that the base algorithms are not designed to estimate directly. We introduce a metaalgorithm, the X-learner, that is provably efficient when the number of units in one treatment group is much larger than in the other and can exploit structural properties of the CATE function. For example, if the CATE function is linear and the response functions in treatment and control are Lipschitz-continuous, the X-learner can still achieve the parametric rate under regularity conditions. We then introduce versions of the X-learner that use RF and BART as base learners. In extensive simulation studies, the X-learner performs favorably, although none of the metalearners is uniformly the best. In two persuasion field experiments from political science, we demonstrate how our X-learner can be used to target treatment regimes and to shed light on underlying mechanisms. A software package is provided that implements our methods.},
	issn = {0027-8424},
	journal = {Proceedings of the National Academy of Sciences}
}

@article{Wager2018,
author = {Stefan Wager and Susan Athey},
title = {Estimation and Inference of Heterogeneous Treatment Effects using Random Forests},
journal = {Journal of the American Statistical Association},
volume = {113},
number = {523},
pages = {1228-1242},
year  = {2018},
publisher = {Taylor & Francis},
doi = {10.1080/01621459.2017.1319839}
}

@article{Lu2018,
abstract = {Estimation of individual treatment effect in observational data is complicated due to the challenges of confounding and selection bias. A useful inferential framework to address this is the counterfactual (potential outcomes) model, which takes the hypothetical stance of asking what if an individual had received both treatments. Making use of random forests (RF) within the counterfactual framework we estimate individual treatment effects by directly modeling the response. We find that accurate estimation of individual treatment effects is possible even in complex heterogenous settings but that the type of RF approach plays an important role in accuracy. Methods designed to be adaptive to confounding, when used in parallel with out-of-sample estimation, do best. One method found to be especially promising is counterfactual synthetic forests. We illustrate this new methodology by applying it to a large comparative effectiveness trial, Project Aware, to explore the role drug use plays in sexual risk. The analysis reveals important connections between risky behavior, drug usage, and sexual risk. Supplementary material for this article is available online.},
author = {Lu, Min and Sadiq, Saad and Feaster, Daniel J. and Ishwaran, Hemant},
doi = {10.1080/10618600.2017.1356325},
eprint = {1701.05306},
issn = {15372715},
journal = {Journal of Computational and Graphical Statistics},
keywords = {Counterfactual model,Individual treatment effect (ITE),Propensity score,Synthetic forests,Treatment heterogeneity},
title = {Estimating Individual Treatment Effect in Observational Data Using Random Forest Methods},
year = {2018},
volume={27},
pages={209-219}
}

@article{Xie2012,
author = {Yu Xie and Jennie E. Brand and Ben Jann},
title ={Estimating Heterogeneous Treatment Effects with Observational Data},
journal = {Sociological Methodology},
volume = {42},
number = {1},
pages = {314-347},
year = {2012}
}

@article{Green2012,
    author = {Green, Donald P. and Kern, Holger L.},
    title = "{Modeling Heterogeneous Treatment Effects in Survey Experiments with Bayesian Additive Regression Trees}",
    journal = {Public Opinion Quarterly},
    volume = {76},
    number = {3},
    pages = {491-511},
    year = {2012}
}

@book{imbens_rubin_2015, place={Cambridge}, title={Causal Inference for Statistics, Social, and Biomedical Sciences: An Introduction}, DOI={10.1017/CBO9781139025751}, publisher={Cambridge University Press}, author={Imbens, Guido W. and Rubin, Donald B.}, year={2015}}

@book{Vapnikbook, author={Vapnik, V.}, year=2000, 
title={The Nature of Statistical Learning Theory}, 
publisher = {Springer, New York, NY}, }

@article{MA2014, author = {Ma, Yanyuan and Zhu, Liping}, title = {On estimation efficiency of the central mean subspace}, journal = {Journal of the Royal Statistical Society: Series B (Statistical Methodology)}, volume = {76}, number = {5}, pages = {885-901}, keywords = {Dimension reduction, Estimating equations, Non-parametric regression, Semiparametric efficiency, Sliced inverse regression}, doi = {10.1111/rssb.12044}, year = {2014}
}

@article{runze2012, author = { Runze Li  and  Wei Zhong  and  Liping Zhu }, title = {Feature Screening via Distance Correlation Learning}, journal = {Journal of the American Statistical Association}, volume = {107}, number = {499}, pages = {1129-1139}, year  = {2012}, publisher = {Taylor & Francis}, doi = {10.1080/01621459.2012.695654}, note ={PMID: 25249709}		
}

@article{rubin2005, author = {Donald B Rubin}, title = {Causal Inference Using Potential Outcomes}, journal = {Journal of the American Statistical Association}, volume = {100}, number = {469}, pages = {322-331}, year  = {2005}, publisher = {Taylor & Francis}, doi = {10.1198/016214504000001880}	
}

@article{rubin1974, title={Estimating causal effects of treatments in randomized and nonrandomized studies.}, author={Rubin, Donald B}, journal={Journal of Educational Psychology}, volume={66}, number={5}, pages={688}, year={1974}, publisher={American Psychological Association}
}

@article{Persson2017,  author = {Persson, Emma and Hggstrm, Jenny and Waernbaum, Ingeborg and de Luna, Xavier},  title = {Data-driven Algorithms for Dimension Reduction in Causal Inference},  journal = {Computational Statistics and Data Analysis},  issue_date = {January 2017},  volume = {105},   month = jan,  year = {2017},  issn = {0167-9473},  pages = {280--292},  numpages = {13},  acmid = {3006282},  publisher = {Elsevier Science Publishers B. V.},  address = {Amsterdam, The Netherlands, The Netherlands},  keywords = {Covariate selection, Kernel smoothing, Marginal co-ordinate hypothesis test, Matching, Type 1 diabetes mellitus}, }

@article{huang2017, author = {Huang, Ming-Yueh and Chan, Kwun Chuen Gary}, title = {Joint sufficient dimension reduction and estimation of conditional and average treatment effects}, journal = {Biometrika}, volume = {104}, number = {3}, pages = {583-596}, year = {2017}
}

@article{kosuke2013,  ISSN = {19326157}, author = {Kosuke Imai and Marc Ratkovic},  journal = {The Annals of Applied Statistics},  number = {1},  pages = {443--470},  publisher = {Institute of Mathematical Statistics},  title = {ESTIMATING TREATMENT EFFECT HETEROGENEITY IN RANDOMIZED PROGRAM EVALUATION},  volume = {7},  year = {2013}
}

@article{luo2017, author = {Luo, Wei and Zhu, Yeying and Ghosh, Debashis}, title = {On estimating regression-based causal effects using sufficient dimension reduction}, journal = {Biometrika}, volume = {104}, number = {1}, pages = {51-65}, year = {2017}, doi = {10.1093/biomet/asw068}, eprint = {/oup/backfile/content_public/journal/biomet/104/1/10.1093_biomet_asw068/7/asw068.pdf}
}

@article{abrevaya2015, Author = {Abrevaya, Jason and Hsu, Yu Chin and Lieli, Robert P.}, Date = {2015/10/02}, Isbn = {0735-0015}, Journal = {Journal of Business \& Economic Statistics}, Month = {10}, Number = {4}, Pages = {485--505}, Publisher = {Taylor \& Francis}, Title = {Estimating Conditional Average Treatment Effects}, Ty = {JOUR}, Volume = {33}, Year = {2015}
}

@article{champion2007effect, title={The Effect of Telephone v. Print Tailoring for Mammography Adherence}, author={Champion, Victoria and Skinner, Celette Sugg and Hui, Siu and Monahan, Patrick and Juliar, Beth and Daggy, Joanne and Menon, Usha}, journal={Patient education and counseling}, volume={65}, number={3}, pages={416}, year={2007}, publisher={NIH Public Access}
}

@article {Kraemer2013, author = {Kraemer, Helena Chmura}, title = {Discovering, comparing, and combining moderators of treatment on outcome after randomized clinical trials: a parametric approach}, journal = {Statistics in Medicine}, volume = {32}, number = {11}, issn = {1097-0258}, pages = {1964--1973}, keywords = {moderator, effect size, personalized medicine}, year = {2013}, }

@article{lu2013, title={Variable selection for optimal treatment decision}, author={Lu, W. and Zhang, H. H. and Zeng, D.}, journal={Statistical Methods in Medical Research}, volume={22}, number={5}, pages={493--504}, year={2013}, publisher={NIH Public Access}
}

@article{Ma2012, An = {23828688}, Author = {Ma, Yanyuan and Zhu, Liping}, Date-Added = {2020-02-03 22:20:37 -0800}, Date-Modified = {2020-02-03 22:20:37 -0800}, Db = {PubMed}, Doi = {10.1080/01621459.2011.646925}, Isbn = {0162-1459; 1537-274X},  Journal = {Journal of the American Statistical Association}, Keywords = {Estimating equations; Nonparametric regression; Robustness; Semiparametric methods; Sliced inverse regression}, Number = {497}, Pages = {168--179}, Title = {A Semiparametric Approach to Dimension Reduction}, Volume = {107}, Year = {2012}
}

@article{qian2011, title={Performance guarantees for individualized treatment rules}, author={Qian, M. and Murphy, S. A.}, journal={Annals of Statistics}, volume={39}, number={2}, pages={1180}, year={2011}, publisher={NIH Public Access}
}

@article{Rosenbaum1983, author="Rosenbaum, P. R. and Rubin, D. B.", year=1983, title="The central role of the propensity score in observational studies for causal effects", journal="Biometrika", volume=70, pages="41--55"
}

@article{Tian2014,   author="Tian, L. and Alizadeh, A. A. and Gentles, A. J. and Tibshirani, R.",   title="A simple method for estimating interactions between a treatment and a large number of covariates",   journal="Journal of the American Statistical Association",   volume=109,   issue=508,   pages="1517-1532",   year=2014}

@article{zhang2012class, title={Estimating optimal treatment regimes from a classification perspective}, author={Zhang, B. and Tsiatis, A. A. and Davidian, M. and Zhang, M. and Laber, E.}, journal={Stat}, volume={1}, number={1}, pages={103--114}, year={2012}, publisher={Wiley Online Library}
}

@article{Xia2007, abstract = {To allow the  to be  to the structure of the dependence of Y on X, we may follow the idea of Hristache, Juditsky, Polzehl and  where ˆ 1/2 is  symmetric matrix with ( ˆ 1/2)2 = ˆ . To implement the  procedure, we suggest the following dOPG algorithm. }, archivePrefix = {arXiv}, arxivId = {arXiv:math.ST/0701761v1}, author = {Xia, Yingcun}, doi = {10.1214/009053607000000352}, eprint = {0701761v1}, issn = {0090-5364}, journal = {The Annals of Statistics}, number = {6}, pages = {2654--2690}, primaryClass = {arXiv:math.ST}, title = {{A constructive approach to the estimation of dimension reduction directions}}, volume = {35}, year = {2007}
}

@article{Xia2002, author = {Xia, Yingcun and Tong, Howell and Li, W K and Zhu, Li-Xing}, doi = {10.1111/1467-9868.03411}, isbn = {9789812836281}, issn = {13697412}, journal = {Journal of the Royal Statistical Society: Series B (Statistical Methodology)}, keywords = {average derivative estimation,dimension reduction,generalized linear models,local linear smoother,multiple time series,non linear time series analysis,nonparametric,principal hessian direction,projection pursuit,regression,regression estimation,semiparametrics,sliced inverse}, mendeley-groups = {Research}, number = {3}, pages = {363--410}, title = {{An adaptive estimation of dimension reduction space}}, volume = {64}, year = {2002}
}

@article{Cook1998, abstract = {Li has suggested the method of principal Hessian directions for estimating plotting directions that capture curvature in regression functions. This article revisits Li's proposal, offering a number of suggestions for improved application.}, author = {Cook, R Dennis}, doi = {10.1080/01621459.1998.10474090}, isbn = {9780470316931}, issn = {0162-1459}, journal = {Journal of the American Statistical Association}, keywords = {Dimension-reduction subspaces,Linear combinations of chi-squared variables,Regression diagnostics,Regression graphics,Residuals.}, number = {441}, pages = {84--94}, title = {{Principal hessian directions revisited}}, volume = {93}, year = {1998}
}

@misc{athey2017efficient,   title={Efficient Policy Learning},   author={Susan Athey and Stefan Wager},   year={2017},   eprint={1702.02896},   archivePrefix={arXiv},   primaryClass={math.ST}, howpublished = "Stanford Institute for Economic Policy Research Working Paper 17-031", url="https://siepr.stanford.edu/sites/default/files/publications/17-031.pdf"
}

@article{Chernozhukov2018, author = {Chernozhukov, Victor and Chetverikov, Denis and Demirer, Mert and Duflo, Esther and Hansen, Christian and Newey, Whitney and Robins, James}, title = {Double/debiased machine learning for treatment and structural parameters}, journal = {The Econometrics Journal}, volume = {21}, number = {1}, pages = {C1-C68}, doi = {10.1111/ectj.12097}, year = {2018}
}

@book{BingLi2018, title={Sufficient Dimension Reduction: Methods and Applications with R}, author={Li, Bing}, year={2018}, publisher={Chapman \& Hall/CRC, Boca Raton, FL, USA}, series={Monographs on Statistics and Applied Probability}, isbn={9781498704472}
}

@article{CHAMBERLAIN1987, title = "Asymptotic efficiency in estimation with conditional moment restrictions", journal = "Journal of Econometrics", volume = "34", number = "3", pages = "305 - 334", year = "1987", issn = "0304-4076", doi = "https://doi.org/10.1016/0304-4076(87)90015-7", author = "Gary Chamberlain"
}

@article{Braveman2006, author = {Paula Braveman}, title = {HEALTH DISPARITIES AND HEALTH EQUITY: concepts and measurement}, journal = {Annual Review of Public Health}, volume = {27}, number = {1}, pages = {167-194}, year = {2006}
}

@article{cook2007, author = "Cook, R. Dennis", doi = "10.1214/088342306000000682", fjournal = "Statistical Science", journal = "Statistical Science", month = "02", number = "1", pages = "1--26", publisher = "The Institute of Mathematical Statistics", title = "Fisher Lecture: dimension reduction in regression", volume = "22", year = "2007"
}

@article {Chen2017, author = {Chen, Shuai and Tian, Lu and Cai, Tianxi and Yu, Menggang},
title = {A general statistical framework for subgroup identification and comparative treatment scoring},
journal = {Biometrics},
volume = {73},
number = {4},
pages = {1199-1209},
keywords = {A-learning, Individualized treatment rules, Observational studies, Propensity score, Regularization},
doi = {10.1111/biom.12676},
abstract = {Summary Many statistical methods have recently been developed for identifying subgroups of patients who may benefit from different available treatments. Compared with the traditional outcome-modeling approaches, these methods focus on modeling interactions between the treatments and covariates while by-pass or minimize modeling the main effects of covariates because the subgroup identification only depends on the sign of the interaction. However, these methods are scattered and often narrow in scope. In this article, we propose a general framework, by weighting and A-learning, for subgroup identification in both randomized clinical trials and observational studies. Our framework involves minimum modeling for the relationship between the outcome and covariates pertinent to the subgroup identification. Under the proposed framework, we may also estimate the magnitude of the interaction, which leads to the construction of scoring system measuring the individualized treatment effect. The proposed methods are quite flexible and include many recently proposed estimators as special cases. As a result, some estimators originally proposed for randomized clinical trials can be extended to observational studies, and procedures based on the weighting method can be converted to an A-learning method and vice versa. Our approaches also allow straightforward incorporation of regularization methods for high-dimensional data, as well as possible efficiency augmentation and generalization to multiple treatments. We examine the empirical performance of several procedures belonging to the proposed framework through extensive numerical studies.},
year = {2017}
}

@article{song2017, author = "Song, Rui and Luo, Shikai and Zeng, Donglin and Zhang, Hao Helen and Lu, Wenbin and Li, Zhiguo", doi = "10.1214/17-EJS1226", journal = "Electronic Journal of Statistics", number = "1", pages = "364--384", publisher = "The Institute of Mathematical Statistics and the Bernoulli Society", title = "Semiparametric single-index model for estimating optimal individualized treatment strategy", volume = "11", year = "2017"
}

@article{Ma2013, author = {Ma, Yanyuan and Zhu, Liping}, doi = {10.1214/12-AOS1072}, eprint = {1304.0593}, isbn = {2122633255}, issn = {00905364}, journal = {Annals of Statistics}, keywords = {Central subspace,Dimension reduction,Estimating equations,Semiparametric efficiency,Sliced inverse regression}, number = {1}, pages = {250--268}, pmid = {24058219}, title = {{Efficient estimation in sufficient dimension reduction}}, volume = {41}, year = {2013}
}

@article{Koch2007, author = {Koch, Inge and Naito, Kanta}, doi = {10.1162/neco.2007.19.2.513}, isbn = {0899-7667 (Print)$\backslash$r0899-7667 (Linking)}, issn = {0899-7667}, journal = {Neural Computation}, number = {2}, pages = {513--545}, pmid = {17206873}, title = {{Dimension selection for feature selection and simension reduction with principal and independent component analysis}}, volume = {19}, year = {2007}
}

@article{Schott1994, author = {Schott, J.R. and James R.}, doi = {10.2307/2291210}, issn = {01621459}, journal = {Journal of the American Statistical Association}, number = {425}, pages = {141--148}, title = {{Determining the dimensionality in sliced inverse regression}}, volume = {89}, year = {1994}
}

@book{Tsiatisbook, author  = {Tsiatis, Anastasios A.}, 
  title   = {Semiparametric Theory and Missing Data}, publisher = {Springer, New York, NY}, year  = 2006, edition = 1, isbn  = {978-0-387-32448-7}
}

@book{BKRWbook, author={Bickel, P. J. and Klaassen, C. A. J. and Ritov, Y. and Wellner, J. A.}, year=1993, 
title={Efficient and Adaptive Estimation for Semiparametric Models}, 
Publisher={Baltimore, MD: The Johns Hopkins University Press}
}

@article{Greenland1993, 
title={Basic problems in interaction assessment}, author={Greenland, S}, journal={Environmental Health Perspectives}, volume={101}, number={Suppl 4}, pages={59-66}, year={1993}, publisher={NIH Public Access}
}

@article{Zhou2017, author = {Xin Zhou and Nicole Mayer-Hamblett and Umer Khan and Michael R. Kosorok}, title = {Residual Weighted Learning for Estimating Individualized Treatment Rules}, journal = {Journal of the American Statistical Association}, volume = {112}, number = {517}, pages = {169-187}, year  = {2017}, publisher = {Taylor & Francis}
}

@article {Lou2017, author = {Lou, Zhilan and Shao, Jun and Yu, Menggang},
title = {Optimal treatment assignment to maximize expected outcome with multiple treatments},
journal = {Biometrics},
volume = {74},
number = {2},
pages = {506-516},
keywords = {Heterogeneity of treatment effectiveness, Individualized treatment rule, Risk bound, RKHS, Weighted multi-category support vector machine},
doi = {10.1111/biom.12811},
abstract = {Summary When there is substantial heterogeneity of treatment effectiveness, it is crucial to identify individualized treatment assignment rules for comparative treatment selection. Traditional approaches directly model clinical outcome and define optimal treatment rule according to the interactions between treatment and covariates. This approach relies on the success of separating the main effects from the covariate–treatment interaction effects, which may not be easy. To overcome this shortcoming, a recent approach, called outcome weighted learning, focuses on building an optimal treatment rule by maximizing the expected clinical outcome related with differential treatments. However, there seems to be a lack of approaches to explicitly deal with three or more treatments. In this article, we propose an outcome weighted learning method that extends estimating individualized treatment rules to multi-treatment case by using a vector hinge loss as a target function. Consistency of the resulting estimator is shown in the article. We demonstrate the performance of our approach in simulation studies and in a real data analysis.},
year = {2018}
}

@article{robins2012,   author = {Rotnitzky, Andrea and Lei, Quanhong and Sued, Mariela and Robins, James M.},   title = "{Improved double-robust estimation in missing data and causal inference models}",   journal = {Biometrika},   volume = {99},   number = {2},   pages = {439-456},   year = {2012},   month = {04},   issn = {0006-3444},   doi = {10.1093/biomet/ass013}
}

@article{Tan2010,   author = {Tan, Zhiqiang},   title = "{Bounded, efficient and doubly robust estimation with inverse weighting}",   journal = {Biometrika},   volume = {97},   number = {3},   pages = {661-682},   year = {2010},   month = {07},   issn = {0006-3444},   doi = {10.1093/biomet/asq035}
}

@article{Tsiatis2009,   author = {Cao, Weihua and Tsiatis, Anastasios A. and Davidian, Marie},   title = "{Improving efficiency and robustness of the doubly robust estimator for a population mean with incomplete data}",   journal = {Biometrika},   volume = {96},   number = {3},   pages = {723-734},   year = {2009},   month = {08}, issn = {0006-3444},   doi = {10.1093/biomet/asp033}
}

@article{robins2005,  ISSN = {0006341X, 15410420},  author = {Heejung Bang and James M. Robins},  journal = {Biometrics},  number = {4},  pages = {962--972},  publisher = {[Wiley, International Biometric Society]},  title = {Doubly Robust Estimation in Missing Data and Causal Inference Models},  volume = {61},  year = {2005}
}

@article{robins1994,  ISSN = {01621459},  author = {James M. Robins and Andrea Rotnitzky and Lue Ping Zhao},  journal = {Journal of the American Statistical Association},  number = {427},  pages = {846--866},  publisher = {[American Statistical Association, Taylor & Francis, Ltd.]},  title = {Estimation of Regression Coefficients When Some Regressors Are Not Always Observed},  volume = {89},  year = {1994}
}

@article{hirano2001, Author = {Hirano, Keisuke and Imbens, Guido W.}, Da = {2001/12/01}, Date-Added = {2020-05-14 11:44:50 -0700}, Date-Modified = {2020-05-14 11:44:50 -0700}, Doi = {10.1023/A:1020371312283}, Id = {Hirano2001}, Isbn = {1572-9400}, Journal = {Health Services and Outcomes Research Methodology}, Number = {3}, Pages = {259--278}, Title = {Estimation of Causal Effects using Propensity Score Weighting: An Application to Data on Right Heart Catheterization}, Volume = {2}, Year = {2001}}

@article{hirano2003, author = {Hirano, Keisuke and Imbens, Guido W. and Ridder, Geert}, title = {Efficient Estimation of Average Treatment Effects Using the Estimated Propensity Score}, journal = {Econometrica}, volume = {71}, number = {4}, pages = {1161-1189}, keywords = {Propensity score, treatment effects, semiparametric efficiency, sieve estimator}, doi = {10.1111/1468-0262.00442}, year = {2003}
}

@article{Yingqi2019, author  = {YingQi Zhao and Eric B. Laber and Yang Ning and Sumona Saha and Bruce E. Sands}, title = {Efficient augmentation and relaxation learning for individualized treatment rules using observational data}, journal = {Journal of Machine Learning Research}, year  = {2019}, volume  = {20}, number  = {48}, pages = {1-23}
}

@article{Xu2015, author = {Xu, Yaoyao and Yu, Menggang and Zhao, Ying Qi and Li, Quefeng and Wang, Sijian and Shao, Jun}, doi = {10.1111/biom.12322}, issn = {15410420}, journal = {Biometrics}, keywords = {Comparative effectiveness,Heterogeneity of treatment effectiveness,Regularization,Subgroup,Variable selection}, number = {3}, pages = {645--653}, pmid = {25962845}, title = {{Regularized outcome weighted subgroup identification for differential treatment effects}}, volume = {71}, year = {2015}
}

@article{Zhao2012, author = {Zhao, Yingqi and Zeng, Donglin and Rush, A John and Kosorok, Michael R}, doi = {10.1080/01621459.2012.695674}, eprint = {1508.03179}, isbn = {0162-1459}, issn = {0162-1459}, journal = {Journal of the American Statistical Association}, keywords = {bayes classifier,cross-validation,dynamic treatment regime,individualized treatment rule,risk bound,rkhs,support vector machine,weighted}, number = {499}, pages = {1106--1118}, pmid = {23630406}, title = {{Estimating individualized treatment rules using outcome weighted learning}}, volume = {107}, year = {2012}
}

@article{XIA20061162, title = "Semi-parametric estimation of partially linear single-index models", journal = "Journal of Multivariate Analysis", volume = "97", number = "5", pages = "1162 - 1184", year = "2006", issn = "0047-259X", doi = "https://doi.org/10.1016/j.jmva.2005.11.005", author = "Yingcun Xia and Wolfgang Hardle", }

@article{Xia2008, author = {Yingcun Xia}, title = {A Multiple-Index Model and Dimension Reduction}, journal = {Journal of the American Statistical Association}, volume = {103}, number = {484}, pages = {1631-1640}, year  = {2008}, publisher = {Taylor & Francis}, doi = {10.1198/016214508000000805}
}

@article{Newey2004, author = {Newey, Whitney K.}, title = {Efficient Semiparametric Estimation via Moment Restrictions}, journal = {Econometrica}, volume = {72}, number = {6}, pages = {1877-1897}, keywords = {GMM, semiparametric efficiency, tangent set, moments}, doi = {10.1111/j.1468-0262.2004.00557.x}, year = {2004}
}
\end{document}